\renewcommand{\paragraph}[1]{\smallskip\noindent\textit{#1}}
\newtheorem{prop}[theorem]{Proposition}
\newtheorem{obs}[theorem]{Observation}
\newtheorem{corol}[theorem]{Corollary}
\newcommand{\eqdef}{:=}
\newcommand{\eps}{\varepsilon}
\def\Z{{\mathbb Z}}
\newcommand\R{{\mathbb R}}
\title{Interference Minimization in Asymmetric Sensor Networks\thanks{%
KB supported in part by the Netherlands Organisation for Scientific 
Research (NWO) under project no.\ 612.001.207.
WM supported in part by DFG Grants MU 3501/1
and MU 3502/2.
}
}
\author{
Yves Brise\inst{1}
\and
Kevin Buchin\inst{2}
\and Dustin Eversmann\inst{3}
\and Michael Hoffmann\inst{1}
\and Wolfgang~Mulzer\inst{3}
}
\authorrunning{Y.~Brise, K.~Buchin, D.~Eversmann, M.~Hoffmann and W.~Mulzer}
\institute{
ETH Z\"urich, Switzerland,
{\tt hoffmann@inf.ethz.ch}
\and
TU Eindhoven, The Netherlands,
{\tt k.a.buchin@tue.de}
\and
FU Berlin, Germany,
{\tt mulzer@inf.fu-berlin.de}
}
\begin{document}
\maketitle

\begin{abstract}
A fundamental problem in wireless sensor networks is to connect
a given set of sensors while minimizing the \emph{receiver
interference}.
This is modeled as follows: each sensor node corresponds
to a point in $\R^d$ and each \emph{transmission range} corresponds
to a ball. The receiver interference of a sensor node is
defined as the number of transmission ranges it lies in.
Our goal is to choose transmission radii that
minimize the maximum interference while maintaining a
strongly connected asymmetric communication graph.

For the two-dimensional case, we show that it is NP-complete
to decide whether one can achieve a receiver interference
of at most $5$. In the one-dimensional case, we prove that there are
optimal solutions with nontrivial structural properties. These
properties can be exploited to obtain an
exact algorithm that runs in quasi-polynomial time.
This generalizes a result by Tan~et al.~to the asymmetric case.
\end{abstract}
\section{Introduction}

Wireless sensor networks constitute a popular paradigm in mobile networks:
several small independent devices are distributed in a certain region,
and each device has limited computational resources. The devices
can communicate through a wireless network. Since battery life is limited,
it is imperative that the overhead for the communication be kept as small
as possible.
One major concern when trying to achieve this goal is to control the
\emph{interference} caused by competing senders. This enables us
to reduce the range of the senders, thus increasing
battery life. At the same time, we need to
ensure that the resulting communication graph remains connected.

There are many different ways to formalize the problem of
interference minimization.  Usually, the devices are modeled as
points in $d$-dimensional space, and
the transmission ranges are modeled as $d$-dimensional balls.
Each point can choose the radius of its transmission range, and
different choices of transmission ranges lead to different
reachability structures.
There are two ways to interpret the resulting
communication graph. In the \emph{symmetric} case,
the communication graph is undirected, and it contains an edge between
two points $p$ and $q$ if and only if both $p$ and $q$ lie in the transmission
range of the other point. For a valid
assignment of transmission ranges, we require that
the communication graph is connected.
In the \emph{asymmetric} case, the communication graph is directed,
and there is an edge from $p$ to $q$ if and only if $p$ lies in the
transmission range of $q$. We require that the communication graph
is strongly connected, or, in a slightly different model,
that there is one point that is reachable from
every other point through a directed path.

In both the symmetric and the asymmetric case, the
(\emph{receiver-centric}) \emph{interference}
of a point is defined as the number of transmission
ranges that it lies in~\cite{RickenbachWaZo09}. The goal is to find a
valid assignment of transmission ranges that makes the maximum interference
as small as possible. We refer to the resulting interference as \emph{minimum interference}.
The minimum interference under the two models for the 
asymmetric case differs by at most one: if there is a point reachable 
from every other, we can increase its transmission range to 
include all other points. As a result, the communication 
graph becomes strongly connected, while the minimum 
interference increases by at most one.

Let $n$ be the number of points.
In the symmetric case,
one can always achieve interference
$O(\sqrt{n})$, and this is sometimes necessary~\cite{HalldorssonTo08,
RickenbachWaZo09}. In the one-dimensional case, there is an
efficient approximation algorithm with
approximation factor $O(n^{1/4})$~\cite{RickenbachWaZo09}.
Furthermore,
Tan~et al.~\cite{TanLoWaHuLa11}
prove the existence of optimal solutions
with interesting structural properties in one dimension.
This can be used to obtain a nontrivial exact algorithm for
this case.
In the asymmetric case, the interference is significantly smaller:
one can always achieve interference $O(\log n)$, which is sometimes
optimal (e.g.,~\cite{Korman12}).

\paragraph{Our results.}
We consider interference minimization in asymmetric wireless
sensor networks in one and two dimensions. We show that for
two dimensions, it is NP-complete to find a valid
assignment that minimizes the maximum interference.
In one dimension we consider our second model requiring
one point that is reachable from
every other point through a directed path.
 Generalizing the
result by Tan~et al.~\cite{TanLoWaHuLa11}, we show that
there is an optimal solution that
exhibits a certain binary tree structure. By means of
dynamic programming, this structure can
be leveraged for a nontrivial exact algorithm. Unlike the
symmetric case, this algorithm always runs in quasi-polynomial
time $2^{O(\log^2 n)}$, making it unlikely that the
one-dimensional problem is NP-hard.
Nonetheless, a polynomial time algorithm remains elusive.

\section{Preliminaries and Notation}

We now formalize our interference model for the
planar case.
Let $P \subset \R^2$ be a planar $n$-point set.
A \emph{receiver assignment} $N : P \rightarrow P$
is a function that assigns to each point in $P$ the furthest
point that receives data from $P$.
The resulting (asymmetric) \emph{communication graph}
$G_P(N)$ is the directed graph
with vertex set $P$ and edge set
$E_P(N) = \{(p, q) \mid \|p - q\| \leq  \|p - N(p)\|\}$,
i.e., from each point $p \in P$ there are edges
to all points that are at least as close as the
assigned receiver $N(p)$.
The receiver assignment $N$ is \emph{valid}
if $G_P(N)$ is strongly connected.

For $p \in \R^2$ and $r > 0$, let $B(p, r)$ denote
the closed disk with center $p$ and radius $r$.
We define $B_P(N) = \{B(p, d(p, N(p)) \mid p \in P \}$ as the
set that contains for each $p \in P$
a disk with center $p$ and $N(p)$ on the boundary.
The disks in $B_P(N)$ are called the \emph{transmission
ranges} for $N$.
The \emph{interference}
of $N$, $I(N)$, is the maximum number of transmission ranges
that cover a point in $P$, i.e.,
$I(N) = \max_{p \in P} |\{ p \in B \mid B \in B_P(N) \}|$.
In the \emph{interference minimization problem}, we are looking for
a valid receiver assignment with minimum interference.

\section{NP-completeness in Two Dimensions}

We show that the following problem is
NP-complete: given a planar point set
$P$, does there exist a valid receiver assignment
$N$ for $P$ with $I(N) \leq 5$?
It follows that the minimum interference for
planar point sets is
NP-hard to approximate within a
factor of $6/5$.

The problem is clearly in NP.
To show that interference minimization is NP-hard,
we reduce from the problem of deciding
whether a grid graph of maximum degree $3$ contains
a Hamiltonian path:
a \emph{grid graph} $G$ is a graph whose vertex
set $V \subset \Z \times \Z$ is a finite
subset of the integer grid.
Two vertices $u, v \in V$ are adjacent in $G$
if and only if $\|u-v\|_1=1$, i.e., if $u$ and $v$
are neighbors in the integer grid.
A \emph{Hamiltonian path} in $G$ is a path
that visits every vertex in $V$ exactly once.
Papadimitriou and Vazirani showed that
it is NP-complete to decide whether a grid graph $G$
of maximum degree $3$ contains a Hamiltonian
cycle~\cite{PapadimitriouVa84}.
Note that we may assume that $G$
is connected;
otherwise there can be no Hamiltonian path.

Our reduction proceeds by replacing each vertex $v$
of the given grid graph $G$ by a \emph{vertex gadget} $P_v$;
see Fig.~\ref{fig:gadget_bare}.
\begin{figure}[b]
  \centering
  \includegraphics[scale=0.8]{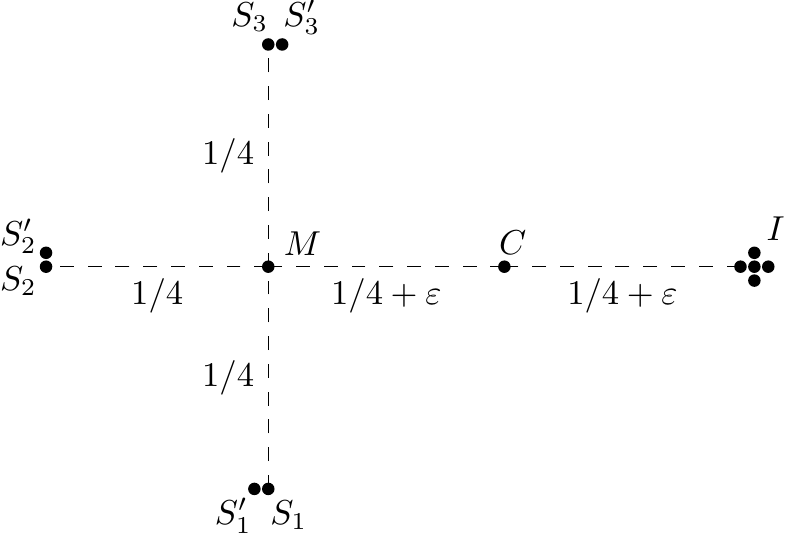}
  \caption{The vertex gadget. }
  \label{fig:gadget_bare}
\end{figure}
The vertex gadget consists of 13 points, and it has five parts:
(a) the \emph{main point} $M$ with
the same coordinates as $v$;
(b) three \emph{satellite stations}
with two points each: $S_1$, $S_1'$,
$S_2$, $S_2'$, $S_3$, $S_3'$.
The coordinates of the $S_i$ are
chosen from $\{v\pm (0,1/4), v \pm (1/4,0)\}$
so that there is a satellite station for each
edge in $G$ that is incident to $v$. If $v$ has degree two, the third 
satellite station can be placed in any of the two remaining directions.
The $S'_i$ lie
at the corresponding clockwise positions from
$\{v\pm (\eps,1/4), v \pm (1/4,-\eps)\}$,
for a sufficiently small $\eps > 0$;
(c) the \emph{connector} $C$, a point
that lies roughly at the remaining position from
$\{v \pm (0,1/4), v \pm (1/4,0)\}$ that is
not occupied by a satellite station, but
an $\eps$-unit further away from $M$. For example,
if $v+(0,1/4)$ has no satellite station, then $C$ lies
at $v+(0,1/4+\eps)$; and
(d) the \emph{inhibitor}, consisting of five
points $I_c, I_1, \dots, I_4$.
The point $I_c$ is the center of the inhibitor
and $I_1$ is the point closest to $C$.
The position of $I_c$ is
$M + 2(C-M) +\eps(C-M)/\|C-M\|$, that is,
the distance between $I_c$ and
$C$ is an $\eps$-unit larger than the distance  between $C$
and $M$:
$\|M-C\|+\eps = \|C-I_c\|$.
The points $I_1, \dots, I_4$ are placed at
the positions $\{I_c \pm (0,\eps), I_c \pm (\eps,0)\}$,
with $I_1$ closest to $C$.

\begin{figure*}
  \centering
  \includegraphics[scale=0.6]{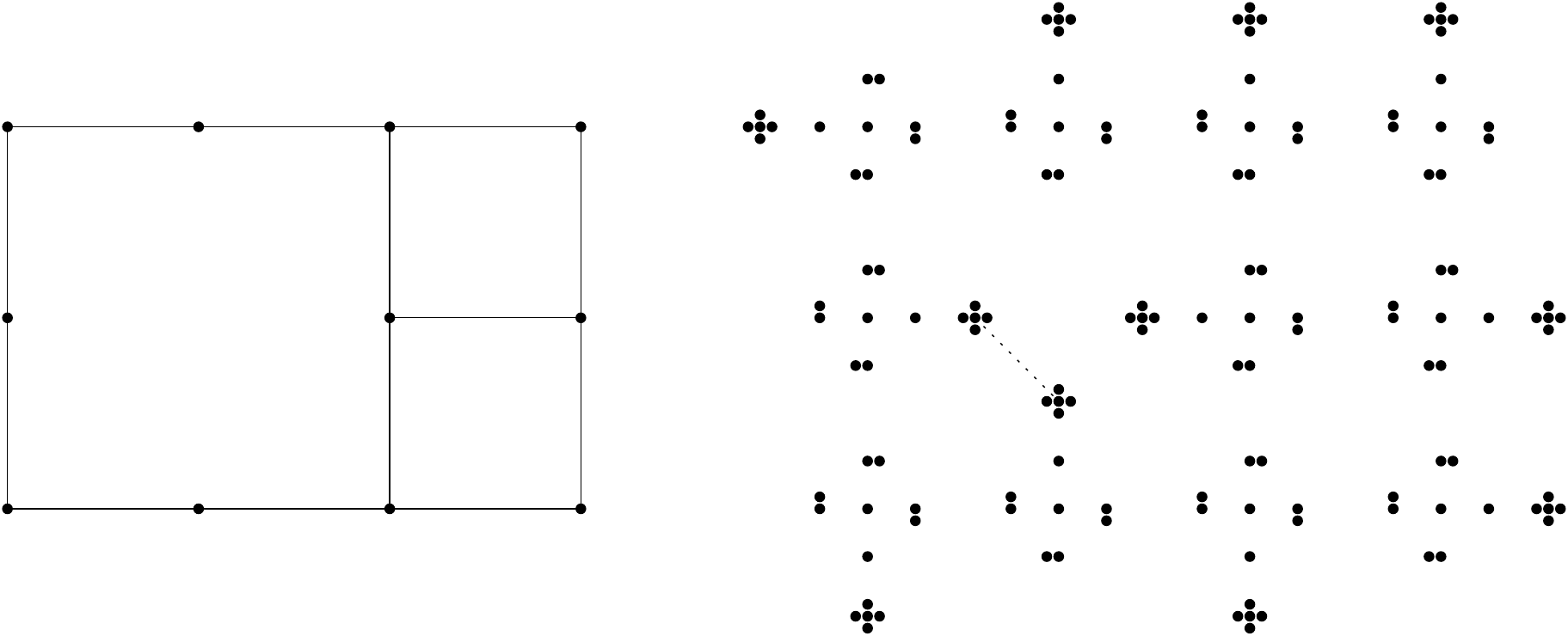}
  \caption{An example reduction.}
  \label{fig:gadget_example}
\end{figure*}

Given a grid graph $G$, the reduction can be
carried out in polynomial time: just replace each
vertex $v$ of $G$ by the corresponding gadget $P_v$; see
Fig.~\ref{fig:gadget_example} for an example.
Let $P = \bigcup_{v \in G} P_v$ be the resulting point set.
Two satellite stations in $P$ that correspond to the same
edge of $G$ are called \emph{partners}.
First, we investigate the interference in any valid
receiver assignment for $P$.

\begin{lemma}
\label{lem:gadget_NN}
Let $N$ be a valid receiver assignment for $P$.
Then in each vertex gadget, the points $I_c$ and $M$
have interference as least $5$,
and the points $S_1, S_2$, and $S_3$ have interference at
least $3$.
\end{lemma}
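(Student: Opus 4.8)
The plan is to bound the interference of a point $x$ from below by exhibiting several distinct transmission disks of $B_P(N)$ that contain $x$. Two elementary facts do most of the work. Since the disk $B(p,\|p-N(p)\|)$ is centered at $p$, it always contains $p$; hence $x$'s own disk already contributes $1$ to its interference. Moreover, because $G_P(N)$ is strongly connected, every point has out-degree at least $1$, so its transmission radius $\|p-N(p)\|$ is at least the distance from $p$ to its nearest neighbor in $P$; consequently the disk of $p$ contains every point within that radius, and in particular every point tied for nearest neighbor of $p$. Throughout I combine these observations with the pairwise distances inside a gadget, assuming (as the construction permits) that $\eps$ is small enough that the $\eps$-perturbations leave all relevant nearest-neighbor relations intact.

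First I would handle $I_c$. Each inhibitor point $I_j$ lies at distance $\eps$ from $I_c$, whereas every other point is strictly farther (a sibling $I_k$ at distance $\ge\sqrt2\,\eps$, the connector $C$ and everything beyond at distance $\ge 1/4$). Thus $I_c$ is the unique nearest neighbor of each $I_j$, so by the out-degree fact the disk of every $I_j$ contains $I_c$; together with $I_c$'s own disk this gives five distinct disks covering $I_c$, and its interference is at least $5$. Next, the satellites: the key observation is that $M$ has the three satellites $S_1,S_2,S_3$ as its nearest neighbors, all at distance exactly $1/4$, with the perturbed points $S_i'$, the connector $C$, the inhibitor, and the other gadgets all strictly farther. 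Hence $M$'s radius is at least $1/4$ and its single disk simultaneously contains all three satellites. For a fixed $S_i$, its partner $S_i'$ lies at distance $\eps$ and has $S_i$ as its unique nearest neighbor, so the disk of $S_i'$ contains $S_i$. Thus $S_i$ is covered by its own disk, by $M$'s disk, and by $S_i'$'s disk, giving interference at least $3$.

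It remains to show that $M$ has interference at least $5$. Its own disk counts once. The connector $C$ has $M$ (tied with $I_1$) as a nearest neighbor at distance $1/4+\eps$, so $C$'s radius is at least $\|C-M\|$ and $C$'s disk contains $M$, a second disk. For each satellite station $\{S_i,S_i'\}$ I would invoke a cut argument: since $G_P(N)$ is strongly connected and the station is a proper nonempty subset of $P$, at least one edge leaves it, i.e., some station point $p$ reaches a point outside the station. The nearest point outside the station, to either station point, is $M$, so $p$'s radius is at least $\|p-M\|$ and $p$'s disk contains $M$. The three stations thus contribute three further disks, all distinct from each other and from the disks of $C$ and $M$, yielding five disks covering $M$.

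The routine but delicate part is the distance bookkeeping underlying every step: for the chosen $\eps$ one must verify that $I_c$ is strictly nearest to each $I_j$, that the three satellites are exactly tied as $M$'s nearest neighbors while all other points are farther, that $S_i$ is strictly nearest to $S_i'$, and that $M$ is the unique nearest point outside each satellite station. This last fact is the main obstacle, since it must hold against all competitors (the other satellites, the connector, and the partner station across the corresponding grid edge) and for the perturbed point $S_i'$ as well as $S_i$; all competitors lie at distance bounded away from $1/4$, so the argument goes through once $\eps$ is fixed sufficiently small.
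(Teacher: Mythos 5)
Your proof is correct and follows essentially the same route as the paper: nearest-neighbor edges forced by out-degree (for $I_1,\dots,I_4 \to I_c$, $S_i' \to S_i$, $M \to \{S_1,S_2,S_3\}$, $C \to M$), a cut argument showing each satellite station must have an edge leaving it whose disk covers $M$, and counting each point's own disk. Your write-up is somewhat more explicit about the distance bookkeeping and tie-breaking than the paper's proof, but the underlying argument is identical.
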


\begin{proof}
For each point $p \in P$, the transmission
range $B(p, d(p, N(p))$ must contain
at least the nearest
neighbor of $p$. Furthermore, in each
satellite station and in each inhibitor,
at least one point must have an assigned receiver outside
of the satellite station or inhibitor; otherwise,
the communication graph $G_P(N)$ would not be strongly connected.
This forces interference of $5$ at $M$ and at $I_c$: each satellite
station and $C$ must have
an edge to $M$, and $I_1, \dots, I_4$ all must have an
edge to $I_c$. Similarly, for $i = 1, \dots 3$, the main
point $M$ and the satellite $S_i'$ must have an edge to $S_i$;
see Fig.~\ref{fig:gadget_NN+}.
\qed{}
\end{proof}

\begin{figure}[htbp]
  \centering
  \includegraphics[scale=0.8]{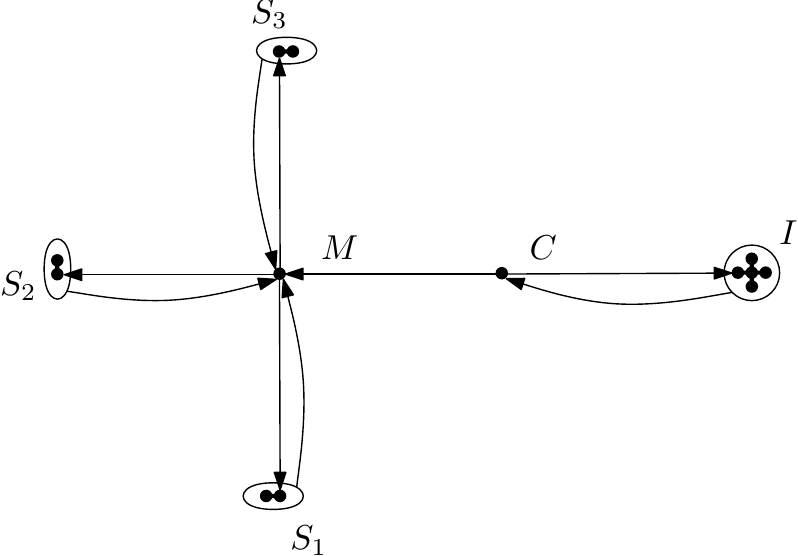}
  \caption{The nearest
    neighbors in a vertex gadget. }
  \label{fig:gadget_NN+}
\end{figure}

Let $N$ be a valid receiver assignment,  and
let $P_v$ be a vertex gadget in $P$. An \emph{outgoing}
edge for $P_v$ is an edge in $G_P(N)$ that originates
in $P_v$ and ends in a different vertex gadget.
An \emph{incoming} edge for $P_v$ is an edge that
originates in a different gadget and ends in $P_v$.
A \emph{connecting} edge for $P_v$ is either an outgoing
or an incoming edge for $P_v$. If $I(N) \leq 5$ holds, then
Lemma~\ref{lem:gadget_NN} implies that a connecting
edge can be incident only to satellite
stations. The proof of the following lemma is given in Appendix~\ref{sec:omitted}.

\begin{lemma}\label{lem:outedge}
Let $N$ be a valid receiver assignment for $P$
with $I(N) \leq 5$.
Let $P_v$ be a vertex gadget of $P$ and
$e$ an outgoing edge from $P_v$
to another vertex gadget $P_w$.
Then $e$ goes from a satellite station of $P_v$ to
its partner satellite station in $P_w$.
Furthermore, in each satellite station of $P_v$,
at most one point is incident to outgoing edges.
\end{lemma}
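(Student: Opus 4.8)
The plan is to use the two interference-saturated points of every gadget—the main point $M$ and the inhibitor center $I_c$, both pinned to interference exactly $5$ by Lemma~\ref{lem:gadget_NN}—as budget barriers that no extra transmission range may touch. Since $I_c$ is already covered by its own range together with the four ranges of $I_1,\dots,I_4$, and $M$ by its own range together with the three satellite-to-$M$ edges and the edge from $C$, any \emph{further} range covering $M$ or $I_c$ would raise its interference to $6$, contradicting $I(N)\le 5$. I would translate these two forbidden events into distance constraints using only the grid geometry: grid-adjacent main points lie at distance $1$, every satellite lies at distance $1/4$ from its main point in a cardinal direction, and $I_c$ lies at distance $2\|C-M\|+\eps = 1/2+3\eps$ from $M$ along the connector direction.

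For the first assertion, I would invoke the fact preceding the lemma that $I(N)\le 5$ forces connecting edges to be incident only to satellite points, so the outgoing edge $e=(p,q)$ runs from a satellite $p$ of $P_v$ to a satellite $q$ of $P_w$ and the transmission range of $p$ has radius at least $\|p-q\|$. The central computation is that the \emph{only} inter-gadget satellite pair at distance below $\sqrt{10}/4\approx 0.79$ is the facing (partner) pair of two grid-adjacent gadgets, which sits at distance exactly $1/2$; every other inter-gadget pair—non-facing satellites of adjacent gadgets, or any satellites of non-adjacent gadgets—is at distance at least $\sqrt{10}/4$. Because the three satellite directions and the connector direction exhaust the four cardinal directions, the distance from any satellite of $P_v$ to its own $I_c$ is largest when the satellite is antipodal to $C$, giving at most $3/4+3\eps$. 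Hence, if $q$ were not the partner of $p$, then $\|p-q\|\ge \sqrt{10}/4 > 3/4+3\eps \ge \|p-I_c\|$ for sufficiently small $\eps$, so the range of $p$ would engulf $I_c$ of $P_v$ and push its interference to $6$, a contradiction. Therefore $q$ lies in the partner station and, the distance-$1/2$ pair being unique, $p$ lies in the station of $P_v$ facing $w$; the check $1/2 < \sqrt{5}/4 \le \|p-I_c\|$ confirms that the legitimate partner range does not itself reach $I_c$, so the configuration is consistent.

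For the second assertion I would argue through $M$. Any outgoing edge from $P_v$ has length at least the minimum inter-gadget distance $1/2$, so its source satellite has radius at least $1/2$ and thus covers $M$, which lies within distance $1/4+\eps$ of every satellite; consequently each point incident to an outgoing edge contributes an edge into $M$. If both points $S_i,S_i'$ of one station carried outgoing edges, both would send an edge into $M$; together with the forced $M$-edge from each of the other two stations, the edge from $C$, and the self-range of $M$, this yields $1+2+1+1+1=6$ ranges covering $M$, contradicting $I(N)\le 5$. Hence at most one point per station is incident to outgoing edges.

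The main obstacle is the geometric bookkeeping behind the single separation $\sqrt{10}/4 > 3/4 + 3\eps$: one must verify that the inhibitor is placed far enough to stay outside the legitimate partner range of radius $1/2$, yet close enough that any over-long range demanded by a non-partner connection (radius at least $\sqrt{10}/4$) necessarily engulfs it, and that all of these separations survive the $\eps$-perturbations of the offset satellites $S_i'$ and of $I_c$ across every admissible orientation of the three satellites and the connector. Once these inequalities are pinned down, both parts reduce to the interference-budget count at $M$ and $I_c$.
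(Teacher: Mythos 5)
Your proposal is correct and follows essentially the same route as the paper's proof: both treat the saturated points $M$ and $I_c$ from Lemma~\ref{lem:gadget_NN} as interference budgets that no further transmission range may touch, and both rule out non-partner and doubled outgoing edges via distance computations in the grid geometry (your explicit threshold $\sqrt{10}/4 > 3/4 + O(\eps)$ sharpens the paper's terser claim that a non-partner edge would raise the interference at $M$ of the destination gadget or at $I_c$ of $P_v$). The only presentational difference is that you import the statement that connecting edges are incident only to satellite stations from the sentence preceding the lemma, whereas the paper's proof re-derives its outgoing half (using that inter-gadget inhibitor distances are at least $\sqrt{2}/2 - O(\eps) > 1/2 + O(\eps)$, the distance from $M$ to its own inhibitor); since your budget machinery yields this fact as well, this is not a logical gap.
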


\noindent
Next, we show that the edges between the vertex gadgets are
quite restricted.

\begin{lemma}
\label{lem:inout}
Let $N$ be a valid receiver assignment for
$P$ with $I(N) \leq 5$.
For every vertex gadget $P_v$ in $P$,
at most two satellite stations in $P_v$
are incident to connecting edges in $G_P(N)$.
\end{lemma}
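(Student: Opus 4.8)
The plan is to argue by contradiction. Suppose some vertex gadget $P_v$ has all three satellite stations incident to connecting edges, while $I(N)\le 5$. By Lemma~\ref{lem:gadget_NN} the main point $M$ and the inhibitor center $I_c$ already attain interference exactly $5$, so in the final disk count no further transmission range may cover either of them; all remaining slack therefore lives at the satellites and at the connector $C$. As a first step I would classify, for each of the three stations, its connecting edges as \emph{incoming} or \emph{outgoing}, recording from Lemma~\ref{lem:outedge} that an outgoing edge leaves a single point of a station and reaches the partner station in a neighbouring gadget.

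The geometric engine of the argument is that the three satellite stations of one gadget are mutually close: any two of them lie within half the grid spacing of each other, while reaching a partner in an adjacent gadget forces a transmission radius of at least that same half-spacing. Hence an \emph{outgoing} satellite has a transmission range that contains both other satellite stations of $P_v$ (and, depending on its position, the connector $C$). Each outgoing satellite thus contributes one extra covering disk to every other satellite of $P_v$, whereas each \emph{incoming} edge contributes one extra covering disk to the satellite at which it arrives. Combined with the baseline interference $3$ from Lemma~\ref{lem:gadget_NN} (the own disk, the disk of $M$, and the disk of the partner point $S_i'$), a satellite that receives an incoming edge has interference at least $3+b+1$, where $b$ is the number of the \emph{other} stations that are outgoing.

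This disposes of most cases. Global strong connectivity forces $P_v$ to carry at least one outgoing and at least one incoming connecting edge, so among three connecting stations neither ``all outgoing'' nor ``all incoming'' can occur. If two stations are outgoing and the third incoming, the incoming station has interference at least $3+2+1=6$, contradicting $I(N)\le 5$. Hence at most one of the three stations is outgoing, which leaves exactly the configuration of one outgoing and two incoming stations.

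Ruling out this last configuration is where I expect the real difficulty to lie, since the satellite budget alone is tight at $5$ and the decisive extra disk must come from the connector--inhibitor part of the gadget. The mechanism to exploit is that the inhibitor can be entered from the rest of the graph only through $C$: this forces the edges $(C,M)$ and $(C,I_1)$, hence a transmission range at $C$ of radius at least $\|M-C\|+\eps$, while $C$ itself must be reached by $I_1$. Tracing these forced edges together with the range of the unique outgoing satellite---which, according to whether it sits adjacent to or opposite $C$, either covers $C$ directly or forces $M$ to enlarge its range to reach $C$---should produce a sixth disk covering $C$ or one of its neighbours. The main obstacle is precisely to verify this uniformly: one must show that, whichever station is outgoing and however $M$'s radius is chosen, preserving strong connectivity through the inhibitor costs one covering disk too many, yielding the contradiction.
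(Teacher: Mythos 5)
Your opening counting argument matches the paper's: baseline interference $3$ at each $S_i$ (Lemma~\ref{lem:gadget_NN}), each outgoing station's radius-$\tfrac12$ ball covers the other two stations, and each incoming edge adds one disk at the receiving $S_i$. This correctly kills the cases of three outgoing stations (no incoming edge can then exist, contradicting strong connectivity) and of two outgoing stations (the third, incoming, station reaches $3+2+1=6$), exactly as in the paper. The problem is the case you yourself flag as the real difficulty --- one outgoing and two incoming stations --- and the route you propose for it is a dead end, not just an unfinished step. There is \emph{no} local interference obstruction in that configuration: it is realizable inside a single gadget with interference exactly $5$. Concretely, if the outgoing station $S_j$ is adjacent to $C$, its radius-$\tfrac12$ ball covers $C$ (distance $\approx\sqrt{2}/4$), so the edge $S_j\to C$ already provides the entry into the $C$--inhibitor pocket and $M$ may keep its minimal radius $\tfrac14$; one then checks $M$: $1+3+1=5$, each incoming $S_i$: $3+1+1=5$, $C$: own $+\,I_1+S_j=3$, $I_c$: $5$. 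If instead $S_j$ is opposite $C$, set $N(M)=C$; then $M$'s ball (radius $\tfrac14+\eps$) supplies the edge into the pocket, and again every point, including $C$ (own $+\,M+I_1=3$) and the $S_i'$ (now covered by $M$, total $4$), stays at or below $5$. So no ``sixth disk'' at $C$ or its neighbours can ever be forced, and no refinement of the disk-counting around the connector and inhibitor will close the case.

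What rules out this configuration in the paper is a \emph{global} connectivity argument, not a local one. If $P_v$ has its single outgoing connection to $P_w$ and incoming connections from $P_{u_1}$ and $P_{u_2}$, consider following outgoing connecting edges gadget by gadget starting at $P_v$. A gadget of this type has exactly one outgoing connection, and a gadget connected to at most two others offers at most one continuation beyond the gadget it was entered from; hence the set of gadgets reachable from $P_v$ is traced by a single deterministic path that stops upon revisiting a gadget. Each of $P_{u_1}$ and $P_{u_2}$, once reached, leads only back to $P_v$, so the path can contain at most one of them; the other is not reachable from $P_v$, contradicting strong connectivity of $G_P(N)$. Your proof needs this (or some equivalent global argument) in place of the hoped-for local contradiction; as written, the central case of the lemma remains unproved.
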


\begin{proof}
By Lemma~\ref{lem:outedge} connecting edges are between satellite stations and
by Lemma~\ref{lem:gadget_NN},
the satellite points $S_i$ in $P_v$ have
interference at least $3$.

First, assume that all three satellite
stations in $P_v$ have outgoing edges.
This would increase the interference
at all three $S_i$ to $5$. Then, $P_v$ could
not have any incoming edge from another vertex gadget,
because this would increase the interference for
at least one $S_i$ (note that due to the placement of
the $S_i'$, every incoming edge causes interference at
an $S_i$). If $P_v$ had no incoming edge, $G_P(N)$ would
not be strongly connected.
It follows that  $P_v$ has at most two satellite stations
with outgoing edges.

Next, assume that two satellite stations in
$P_v$ have outgoing edges. Then, the third satellite station
of $P_v$ cannot have an incoming edge,
as the two outgoing edges already increase the interference at the
third satellite station to $5$.

Hence, we know that every vertex gadget $P_v$ either
(i) has connecting edges with all three partner
gadgets, exactly one of which is outgoing,
or  (ii) is connected to
at most two other vertex gadgets. Take a vertex gadget $P_v$ of type (i)
with partners $P_{u_1}, P_{u_2}, P_w$.
Suppose that $P_v$ has incoming edges from $P_{u_1}$ and
$P_{u_2}$
and that the outgoing edge goes to $P_w$. Follow the outgoing edge to
$P_w$. If $P_w$ is of type (i), follow the outgoing
edge from $P_w$; if $P_w$ is of type (ii) and has an
outgoing edge to a vertex gadget we have not seen yet, follow this edge.
Continue this process until $P_v$ is reached again or until
the next vertex gadget has been visited already. This gives
all vertex gadgets that are reachable from $P_v$ on a
directed path.
However, in each step there is only one choice for the next
vertex gadget. Thus,
the process cannot discover $P_{u_1}$ and $P_{u_2}$, since both of them
would lead to $P_v$ in the next step, causing the process to stop.
It follows that at least one of $P_{u_1}$ or
$P_{u_2}$ is not reachable from $P_v$, although
$G_P(N)$ should be strongly connected. Therefore, all vertex gadgets in
$G_P(N)$ must be of type (ii), as claimed in the lemma.
\qed{}
\end{proof}

\noindent
We can now prove the main theorem of this section.

\begin{theorem}\label{thm:asym_NPC}
Given a point set $P \subset \R^2$, it is
\textup{NP}-complete to decide whether
there exists a valid receiver assignment $N$ for
$P$ with $I(N) \leq 5$.
\end{theorem}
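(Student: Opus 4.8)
The plan is to finish the reduction by establishing the equivalence: the grid graph $G$ has a Hamiltonian path if and only if $P$ admits a valid receiver assignment $N$ with $I(N) \le 5$. Since the problem is already in NP and the map $G \mapsto P$ is computable in polynomial time, this equivalence, combined with the NP-completeness of deciding Hamiltonian paths in connected degree-$3$ grid graphs (which inherits hardness from the Hamiltonian-cycle result of Papadimitriou and Vazirani~\cite{PapadimitriouVa84}), yields the theorem.

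For the ``only if'' direction I would rely entirely on Lemmas~\ref{lem:gadget_NN}--\ref{lem:inout}, which have already carried out the combinatorial work. Given a valid $N$ with $I(N) \le 5$, I contract each vertex gadget $P_v$ to a single node and form the auxiliary graph $H$ whose edges are the pairs of gadgets joined by a connecting edge of $G_P(N)$. By Lemma~\ref{lem:outedge} every connecting edge runs between partner satellite stations, so each edge of $H$ corresponds to an edge of $G$, and $H$ is a subgraph of $G$. By Lemma~\ref{lem:inout} each gadget is incident to connecting edges from at most two other gadgets, so $H$ has maximum degree two and is thus a disjoint union of simple paths and cycles. Because $G_P(N)$ is strongly connected and connecting edges are the only inter-gadget edges, every gadget must be reachable from every other; hence $H$ is connected with no isolated vertices. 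A connected graph of maximum degree two on all gadgets is a single path or a single cycle, and in either case $G$ contains a Hamiltonian path.

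For the ``if'' direction I would exhibit an explicit assignment. Starting from a Hamiltonian path $v_1 v_2 \cdots v_n$ of $G$, I first set the receiver of every point to its forced nearest neighbor, as in Lemma~\ref{lem:gadget_NN}, so that each gadget is internally strongly connected and the interference at $M$, $I_c$, and the $S_i$ reaches exactly the forced values $5$, $5$, and $3$. I then install the connecting edges along the path through the partner satellite stations: an interior gadget $P_{v_i}$ is traversed by routing one incoming edge from $P_{v_{i-1}}$ and one outgoing edge to $P_{v_{i+1}}$ through its two corresponding satellite stations, while each endpoint gadget folds back, using its single partner station for both an incoming and an outgoing edge to its unique neighbor. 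A short check then confirms that the resulting directed graph is strongly connected, since one can travel along the path in both directions and enter or leave each gadget through its satellite stations.

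The hard part will be the interference bookkeeping in this last step: I must verify that enlarging transmission ranges to realize the connecting edges keeps the interference at every one of the thirteen point types of each gadget at most $5$. This is where the geometric placement is designed precisely -- the small $\eps$-offsets of the $S_i'$, the connector $C$ sitting just beyond $M$, and the inhibitor located at distance $\|M-C\|+\eps$ from $C$ -- so that the two stations used by a pass-through gadget, and the single doubly-used station of an endpoint gadget, can absorb their connecting edges without pushing any point past interference $5$. I expect the endpoint (fold-back) gadgets to demand the most careful accounting, because there an incoming and an outgoing long-range edge meet at the same satellite station; confirming that this configuration stays feasible, and that no enlarged range covers a third satellite station or an inhibitor point, is the crux of the completeness argument.
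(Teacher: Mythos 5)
Your plan follows the paper's reduction exactly, and your ``only if'' direction (contracting gadgets and using Lemmas~\ref{lem:outedge} and~\ref{lem:inout} to obtain a spanning connected subgraph of maximum degree two, hence a Hamiltonian path or cycle) is the paper's argument almost verbatim and is correct. The genuine gaps are both in your ``if'' direction, and they are errors of substance, not bookkeeping. First, the forced nearest-neighbor edges do \emph{not} make a gadget internally strongly connected. Under nearest-neighbor assignments the pairs $\{S_i,S_i'\}$ and the inhibitor form isolated components, and even after adding the escape edges forced by Lemma~\ref{lem:gadget_NN}, the set consisting of $M$, the satellites and the primed satellites still has no edge into $C$ or the inhibitor: $M$'s nearest neighbors are the three satellites at distance $1/4$, while $C$ sits at distance $1/4+\eps$. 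This is exactly why the paper's construction departs from nearest neighbors: it sets $N(M)=C$ (and exploits the tie $\|C-M\|=\|C-I_1\|=1/4+\eps$, so that $C$'s range reaches $I_1$), $N(I_1)=C$, and $N(S_i)=M$ for the satellites not used by the path. Without these non-nearest-neighbor choices your gadgets are not strongly connected, and the interference values you claim are ``exactly'' attained ($5$, $5$, $3$) are not attained either.

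Second, your routing of the path edges cannot yield a strongly connected graph. Giving each interior gadget exactly one incoming connecting edge (from its predecessor) and one outgoing one (to its successor), with fold-backs only at the two endpoint gadgets, permits travel in only one direction along the middle of the path: from $P_{v_3}$ there is no directed path back to $P_{v_2}$ or $P_{v_1}$. Your assertion that ``one can travel along the path in both directions'' requires every edge of the Hamiltonian path to be realized by \emph{two} directed connecting edges, one in each direction---which is what the paper does: for every satellite $S_i$ lying on a path edge it sets $N(S_i)$ to the partner station, so the two partner stations point at each other. Under that corrected construction, the endpoint gadgets you single out as the crux are not special at all: every used station, interior or endpoint, carries exactly one incoming and one outgoing long edge, and the interference at such an $S_i$ is its own range, plus $M$'s range, plus $S_i'$'s range, plus the partner's range, plus possibly the other long range from the same gadget, i.e.\ at most $5$.
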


\begin{proof}
Using the receiver assignment $N$ as certificate,
the problem is easily seen to be in NP.
To show NP-hardness, we use the polynomial time reduction
from the Hamiltonian path problem in grid graphs:
given a grid graph $G$ of maximum degree
$3$, we construct a planar point set $P$ as above.
It remains to verify that $G$ has a Hamiltonian path
if and only if $P$ has a valid receiver assignment $N$
with $I(N) \leq 5$.

Given a Hamilton path $H$ in
$G$, we construct
a valid receiver assignment $N$ for
$P$ as follows:
in each vertex gadget, we set $N(M) = C$,
$N(C) = M$, and $N(I_1) =C $. For
$i = 1, \dots, 3$ we set $N(S'_i) = S_i$ and
$N(I_{i+1}) = I_c$.
Finally, we set $N(I_c) = I_1$.
This essentially creates the edges from
Fig.~\ref{fig:gadget_NN+}, plus the edge
from $M$ to $C$.
Next, we encode $H$ into $N$: for each $S_i$ on
an edge of $H$, we set $N(S_i)$  to the corresponding
$S_i$ in the partner station. For the remaining
$S_i$, we set $N(S_i) = M$.
Since $H$ is Hamiltonian,
$G_P(N)$ is
strongly connected (note that
each vertex gadget induces a strongly connected
subgraph).
It can now be verified that $M$ and $I_c$ have
interference $5$;
$I_2$, $I_3$, $I_4$ have interference $2$; and
$I_1$ has interference $3$.
The point $C$ has interference between $2$ and
$4$, depending on whether $S_1$ and $S_3$ are
on edges of $H$.
The satellites $S_i$ and $S'_i$
have interference at most $5$ and $4$,
respectively.

Now consider a valid receiver assignment $N$ for
$P$ with
$I(N) \leq 5$.
Let $F$ be the set of edges in $G$ that correspond
to  pairs of vertex gadgets with a connecting edge
in $G_P(N)$.
Let $H$ be the subgraph that $F$ induces in $G$.
By Lemma~\ref{lem:inout}, $H$ has maximum degree $2$.
Furthermore, since $G_P(N)$ is strongly connected,
the graph $H$ is connected and meets all vertices of $G$.
Thus, $H$ is a Hamiltonian path (or cycle) for $G$,
as desired.
\qed{}
\end{proof}

\textbf{Remark}.  A similar result to Theorem~\ref{thm:asym_NPC}
also holds for symmetric communication graphs networks~\cite{Buchin08}.

\section{The One-Dimensional Case}

For the one-dimensional case we
minimize receiver interference under the second model discussed in the introduction:
given $P \subset \R$ and
a receiver assignment
$N: P \rightarrow P$, the graph
$G_P(N)$ now has a directed edge from each point $p \in P$
to its assigned receiver $N(p)$, and no other edges.
$N$ is \emph{valid} if $G_P(N)$ is acyclic and if
there is a sink $r \in P$ that is reachable from every
point in $P$. The sink has no outgoing edge. The interference
of $N$, $I(N)$,  is defined as before.

\subsection{Properties of Optimal Solutions}

We now explore the structure of optimal
receiver assignments. Let $P \subset \R$ and $N$ be a valid receiver
assignment for $P$ with sink $r$. We can interpret
$G_P(N)$ as a directed tree, so we
call $r$ the \emph{root} of $G_P(N)$.
For a directed edge $pq$ in $G_P(N)$, we say that $p$ is a \emph{child} of $q$
and $q$ is the \emph{parent} of $p$.
We write $p \leadsto_N q$ if there is a
directed path from $p$ to $q$ in $G_P(N)$.
If $p \leadsto_N q$,
then $q$ is an \emph{ancestor} of $p$ and
$p$ a
\emph{descendant} of $q$. Note that $p$ is both an
ancestor and a descendant of $p$. Two points $p, q \in P$ are \emph{unrelated}
if $p$ is neither an ancestor nor a descendant of $q$.
For two points $p$, $q$, we define $((p,q)) = (\min\{p,q\}, \max\{p,q\})$
as the open interval bounded by $p$ and $q$, and
$[[p,q]] = [\min\{p,q\}, \max\{p,q\}]$ as the closure of
$((p, q))$.
An edge $pq$ of $G_P(N)$ is a \emph{cross edge}
if the interval $((p,q))$ contains at least one point that is
not a descendant of $p$.

Our main structural result is that there is always an optimal
receiver assignment for $P$ without cross edges.
A similar property was observed by Tan~et al.~for the
symmetric case~\cite{TanLoWaHuLa11}.

\begin{lemma}\label{lem:nocross}
Let $N^*$ be a valid receiver assignment for $P$ with
minimum interference.
There is a valid assignment $\widetilde{N}$ for $P$ with
$I(\widetilde{N}) = I(N^*)$
such that $G_P(\widetilde{N})$ has no cross edges.
\end{lemma}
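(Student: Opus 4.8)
The plan is to start from $N^*$ and repeatedly perform a local rerouting that shrinks one transmission range, driving down a monovariant until no cross edge remains. The natural potential is the total transmission radius
\[
\Phi(N) = \sum_{p \in P} |p - N(p)|,
\]
and the goal is to show that, staying inside the set of minimum-interference assignments, one can reach a cross-edge-free assignment by strictly decreasing $\Phi$.

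The single rerouting step works as follows. Let $N$ be a current valid minimum-interference assignment and let $pq$ be a cross edge, where $q = N(p)$. By definition there is a point $z$ strictly between $p$ and $q$ that is not a descendant of $p$; note $z \in P$, $z \neq p$, and $|p - z| < |p - q|$. Define $\widetilde N$ by $\widetilde N(p) = z$ and $\widetilde N = N$ elsewhere. I would then verify three things. First, \emph{validity is preserved}: since $z$ is not a descendant of $p$, the directed path from $z$ to the root $r$ never reaches $p$ and uses none of $p$'s outgoing edge, so rerouting $p$'s single edge creates no cycle; every node of the subtree rooted at $p$ now reaches $r$ via $p \to z \leadsto r$, and every other node keeps its old path to $r$. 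Second, \emph{interference does not increase}: only the range of $p$ changes, and it shrinks from radius $|p-q|$ to $|p-z| < |p-q|$, so the set of points it covers only loses members and the number of ranges covering any fixed point is non-increasing. Third, $\Phi$ \emph{strictly decreases}, again because $|p-z| < |p-q|$.

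With the step in hand, I would run it as a while-loop: as long as a cross edge exists, pick one and apply the rerouting. Because $N$ ranges over the finite set of functions $P \to P$, the strictly decreasing value $\Phi$ can change only finitely often, so the loop halts; at termination no cross edge remains. Every assignment produced along the way is valid with interference at most $I(N^*)$, so the same holds for the final assignment $\widetilde N$, and minimality of $N^*$ forces $I(\widetilde N) = I(N^*)$, which is exactly the claim.

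The step I expect to be the main obstacle is controlling the bookkeeping: rerouting $p$ detaches its subtree from $q$ and reattaches it below $z$, which changes the ancestor/descendant relation and can \emph{create} new cross edges at higher edges that previously enclosed only descendants of $p$. This rules out the naive monovariant ``number of cross edges.'' The point of using $\Phi$ instead is that it sidesteps this entirely: no matter how many cross edges appear or vanish in a single step, $\Phi$ strictly drops, and strict decrease of a finitely-valued quantity is all that termination requires. The remaining checks ($z \in P$, $z \neq p$, and $p \neq r$ since $p$ has an outgoing edge) are immediate and make the rerouting well-defined at every step.
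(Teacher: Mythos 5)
Your proof is correct, and it takes a genuinely simpler route than the paper's. The local move is the same in both arguments---replace a cross edge $pq$ by $pz$ where $z \in ((p,q))$ is not a descendant of $p$, and verify validity (the path from $z$ to the root avoids $p$, so no cycle arises and reachability is preserved) and non-increase of interference (the ball of $p$ shrinks)---but the surrounding machinery differs. The paper runs an extremal argument on the total length of \emph{cross edges only}: it fixes an optimal assignment minimizing $C(\widetilde{N})=\sum_{pq\in C}\|p-q\|$ and derives a contradiction. Since that potential is only over cross edges, the paper must prove that the reroute creates no new cross edges, and this is where all the extra work lives: the cross edge $pq$ must be chosen deepest in the tree, $z$ must be the \emph{closest} non-descendant (adjacent to the descendant interval $[p_l,p_r]$), and a second compensating edge swap ($z'z$ rerouted to $p_l$ or $p_r$) is needed, followed by a delicate case analysis (Propositions~\ref{prop:descendants} and~\ref{clm:lesscross}). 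Your choice of the \emph{total} edge length $\Phi(N)=\sum_{p}\|p-N(p)\|$ as the monovariant makes all of that unnecessary: $\Phi$ strictly drops under any shrinking reroute regardless of which cross edges appear or vanish, so an arbitrary cross edge and an arbitrary witness $z$ suffice, and termination follows from finiteness of the assignment space. You correctly identified the exact obstruction (new cross edges can be created, so counting them fails as a monovariant) that the paper instead confronts head-on with its extra structure; your potential sidesteps it, yielding a shorter and more robust proof of the same statement with no loss of generality.
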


\begin{proof}
Pick a valid assignment
$\widetilde{N}$ with minimum interference that minimizes the total
length of the cross edges
\[
C(\widetilde{N}) \eqdef \sum_{pq \in C} \|p-q\|,
\]
where $C$ are the cross-edges of $G_P(\widetilde{N})$.
If $C(\widetilde{N}) = 0$, we are done.
Thus, suppose $C(\widetilde{N}) > 0$. Pick a cross edge $pq$ such that
the hop-distance (i.e., the number of edges) from $p$ to the root
is maximum among all cross edges.
Let $p_l$ be the leftmost and $p_r$ the rightmost descendant of $p$. We refer 
to Appendix~\ref{sec:omitted} for a proof of the following lemma.

\begin{prop}\label{prop:descendants}
The interval $[p_l, p_r]$ contains only descendants of $p$.
\end{prop}

Let $R$ be the points in $((p,q))$ that
are not descendants of $p$. Each point in $R$
is either unrelated to $p$, or it is an ancestor of $p$.
Let $z \in R$ be the point in $R$ that is closest
to $p$ (i.e.,  $z$ either lies directly to the left of $p_l$ or directly
to the right of $p_r)$.
We now describe how to construct a new valid assignment $\widehat{N}$,
from which we will eventually derive a contradiction to the choice of
$\widetilde{N}$.
The construction is as follows:
replace the edge $pq$ by $pz$. Furthermore, if
(i) $q \leadsto_{\widetilde{N}} z$;
(ii) the last edge  $z'z$ on the path
from $q$ to $z$ crosses the interval $[p_l, p_r]$; and (iii)
$z'z$ is not a cross-edge, we also
change the edge $z' z$ to the edge that connects $z'$ to the
closer of $p_l$ or $p_r$. We give the proof of the following proposition in Appendix~\ref{sec:omitted}.

\begin{prop}\label{clm:validass}
$\widehat{N}$ is a valid assignment.
\end{prop}

\begin{prop}\label{clm:optass}
We have $I(N^*) = I(\widehat{N})$.
\end{prop}

\begin{proof}
Since the new edges are shorter than the edges they replace, each
transmission range for $\widehat{N}$ is contained in the corresponding
transmission range for $\widetilde{N}$. The interference cannot decrease since $N^*$ is
optimal.
\qed{}
\end{proof}

\begin{prop}\label{clm:lesscross}
We have $C(\widehat{N}) < C(\widetilde{N})$.
\end{prop}

\begin{proof}
First, we claim that $\widehat{N}$ contains
no new cross edges, except possibly $pz$:
suppose $ab$ is a cross edge
of $G_P(\widehat{N})$, but not of $G_P(\widetilde{N})$.
This means that $((a,b))$ contains a point $x$
with $x \leadsto_{\widetilde{N}} a$,
but $x \not\leadsto_{\widehat{N}} a$.
Then $x$
must be a descendant of $p$ in $G_P(\widetilde{N})$ and in
$G_P(\widehat{N})$, because
as we saw in the proof of Claim~\ref{clm:validass},
for any $y \in P \setminus [p_l, p_r]$, we have that if $y \leadsto_{\widetilde{N}} a$,
then $y \leadsto_{\widehat{N}} a$.

Hence, $((a,b))$ and $[p_l, p_r]$ intersect.
Since $ab$ is a cross edge, the choice of $pq$ now implies that
$[p_l, p_r] \subseteq ((a,b))$.
Thus, $z$ lies in $[[a,b]]$, because $z$ is a direct neighbor of $p_l$ or $p_r$.
We claim that $b = z$. Indeed, otherwise we would have $z \leadsto_{\widetilde{N}} a$
(since $ab$ is not a cross edge in $G_P(\widetilde{N})$), and thus also
$z \leadsto_{\widehat{N}} a$. However, we already observed
$x \leadsto_{\widehat{N}} p$, so we would have
$x \leadsto_{\widehat{N}} a$ (recall that we introduce the edge $pz$
in $\widehat{N}$).
This contradicts our choice of $x$.

Now it follows that $ab = az$ is the last
edge on the path from $p$ to $z$, because if $a$ were not an ancestor
of $p$, then $ab$ would already be a cross-edge in $G_P(\widetilde{N})$.
Hence, (i) $a$ is an ancestor of $q$; (ii) $az$ crosses the interval
$[p_l, p_r]$; and (iii) $az$ is not a cross edge in $\widetilde{N}$.
These are the conditions for the edge $z' z$ that
we remove from $\widetilde{N}$.
The new edge $e$ from $a$ to $p_l$ or $p_r$ cannot be a cross edge,
because $ab$ is not a cross edge in $G_P(\widehat{N})$ and $e$ does
not cover any descendants of $p$.

Hence, $G_P(\widehat{N})$  contain no new cross-edges,
except possibly $pz$ which replaces
the cross edge $pq$.
By construction, $\|p-z\| < \|p-q\|$, so
$C(\widehat{N}) < C(\widetilde{N})$.
\qed{}
\end{proof}

\noindent
Propositions~\ref{clm:validass}--\ref{clm:lesscross}
yield a contradiction to the choice of $\widetilde{N}$.
It follows that we must have $C(\widetilde{N}) = 0$, as desired.
\qed{}
\end{proof}

Let $P \subset\R$.
We say that a valid assignment $N$ for $P$ has the \emph{BST-property} if the
the following holds for any vertex $p$ of $G_P(N)$:
(i) $p$ has at most one child $q$ with $p < q$ and at most
one child $q$ with $p > q$; and (ii) let $p_l$ be the
leftmost and $p_r$ the rightmost descendant of $p$.
Then $[p_l, p_r]$ contains only descendants of $p$.
In other words: $G_P(N)$ constitutes a binary
search tree for the (coordinates of the) points in $P$.
A valid assignment without cross edges has
the BST-property. The following is therefore an immediate consequence of Lemma~\ref{lem:nocross}.

\begin{theorem}\label{thm:BST}
Every $P \subset \R$ has an optimal valid assignment
with the \emph{BST-property}.\qed
\end{theorem}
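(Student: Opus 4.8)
The plan is to deduce Theorem~\ref{thm:BST} directly from Lemma~\ref{lem:nocross} by verifying the short bridging claim already asserted in the text: \emph{a valid assignment without cross edges has the BST-property}. Since Lemma~\ref{lem:nocross} guarantees an optimal (minimum-interference) valid assignment $\widetilde{N}$ whose communication graph has no cross edges, it suffices to show that cross-edge-freeness implies both conditions (i) and (ii) in the definition of the BST-property. I would therefore spend the proof entirely on this implication, as everything else is handed to us by the earlier lemma.

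First I would establish condition~(ii), which is the cleaner of the two, since it is essentially Proposition~\ref{prop:descendants} applied uniformly. Fix any vertex $p$ and let $p_l, p_r$ be its leftmost and rightmost descendants. Suppose for contradiction that some point $x \in [p_l, p_r]$ is not a descendant of $p$. I would trace the edges along the subtree rooted at $p$: because $p_l$ and $p_r$ are both descendants of $p$ while $x$ between them is not, some edge of the subtree rooted at $p$ must ``straddle'' $x$, i.e.\ there is a descendant $a$ of $p$ with parent $b$ (also a descendant of $p$, or equal to $p$) such that $x \in ((a,b))$ and $x$ is not a descendant of $a$. That edge $ab$ is then a cross edge, contradicting cross-edge-freeness. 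The key observation making this work is that $x$ lying strictly between $p_l$ and $p_r$ forces it inside the open interval of some parent-child edge of the subtree once we pick $a$ to be the descendant of $p$ closest to $x$ on the appropriate side.

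Next I would handle condition~(i): each vertex $p$ has at most one child $q$ with $q < p$ and at most one child with $q > p$. Suppose $p$ had two children $q_1 < q_2 < p$ on its left. Consider the edge $q_2 p$. The subtree rooted at $q_1$ contains points in $[[q_1, \text{its rightmost descendant}]]$, all lying strictly between $q_2$ and $p$ by the ordering, and none of these are descendants of $q_2$ (they are descendants of $q_1$, a sibling). Hence $((q_2, p))$ contains a point that is not a descendant of $q_2$, so $q_2 p$ is a cross edge, again a contradiction. The symmetric argument rules out two right-children. I expect this to be the only step requiring a little care, since one must argue that the sibling subtree actually deposits a point inside the relevant open interval; invoking condition~(ii) for $q_1$ (just proved) to locate such a point cleanly is the natural device.

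Finally I would conclude: $\widetilde{N}$ is valid, has minimum interference, and by the two implications above satisfies both conditions of the BST-property, so $G_P(\widetilde{N})$ is a binary search tree on the coordinates of $P$. This is exactly the optimal valid assignment with the BST-property claimed by the theorem. The main obstacle is not any single deep step but rather making the ``straddling edge'' argument precise, namely formalizing that between two descendants of a node lies an edge whose open interval witnesses a non-descendant; the hop-distance/order bookkeeping is where an informal proof could leave a gap, so that is where I would be most careful.
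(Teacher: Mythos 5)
Your overall route is exactly the paper's: Theorem~\ref{thm:BST} is presented as an immediate consequence of Lemma~\ref{lem:nocross} together with the bridging claim that an assignment without cross edges has the BST-property (which the paper asserts without proof), so your plan to verify that implication is the right reading. Your treatment of condition (ii) is essentially Proposition~\ref{prop:descendants} and is sound, with one caveat you partly flag yourself: ``pick $a$ to be the descendant of $p$ closest to $x$ and take its parent edge'' does not quite work, because the parent of $a$ may lie on the same side of $x$ as $a$. The clean version follows the path from $p_l$ (or $p_r$) up to $p$: its vertices are all descendants of $p$, the path starts on one side of $x$ and ends on the other, so some edge $ab$ on it has $x \in ((a,b))$, and since $x$ is not a descendant of $a$, that edge is a cross edge.

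Your argument for condition (i), however, contains a genuine error. With two left children $q_1 < q_2 < p$, you examine the edge $q_2 p$ and claim that the subtree rooted at $q_1$ lies strictly between $q_2$ and $p$. This is false: $q_1 < q_2$, so $q_1$ itself---and possibly its entire subtree---lies outside $((q_2,p))$, and $q_2 p$ need then not be a cross edge at all (example: the subtree of $q_1$ is $\{q_1\}$ and every point of $P$ in $(q_2,p)$ is a descendant of $q_2$). Invoking condition (ii) for $q_1$ cannot repair this, since nothing forces descendants of $q_1$ into $((q_2,p))$. The correct and simpler move is to look at the edge from the \emph{farther} child: $((q_1,p))$ contains $q_2$, and $q_2$ is not a descendant of $q_1$ (if $q_2 \leadsto q_1$, then since the unique out-edge of $q_2$ goes to $p$ we would get $p \leadsto q_1$, while also $q_1 \to p$, a directed cycle contradicting validity). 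Hence $q_1 p$ is a cross edge, the desired contradiction. With that one-line substitution your proof of the bridging claim is complete, and the theorem follows from Lemma~\ref{lem:nocross} exactly as the paper intends.
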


\subsection{A Quasi-Polynomial Algorithm}\label{sec:qalgo}

We now show how to use Theorem~\ref{thm:BST} for
a quasi-polynomial time algorithm to minimize the
interference. The algorithm uses dynamic
programming. A subproblem $\pi$ for the dynamic program consists of four
parts: (i) an interval $P_\pi \subseteq P$ of \emph{consecutive} points
in $P$;
(ii) a root $r_\pi \in P_\pi$; (iii) a set
$I_\pi$ of \emph{incoming interference}; and (iv) a set $O_\pi$
of \emph{outgoing interference}.

The objective of $\pi$ is to find an optimal valid assignment $N$ for
$P_\pi$ subject to
(i) the root of $G_N(P_\pi)$ is $r$;
(ii) the set $O_\pi$ contains all transmission ranges of
$B_{P_\pi}(N)$ that cover points in $P \setminus P_\pi$ plus potentially a
transmission range with center $r_\pi$;
(iii) the set $I_\pi$ contains
transmission ranges that cover points in $P_\pi$ and have their center in $P \setminus P_\pi$.
The interference of $N$ is defined as the maximum number of transmission ranges in
$B_{P_\pi}(N) \cup I_\pi \cup O_\pi$ that cover any given point of $P_\pi$.
The transmission ranges in $O_\pi \cup I_\pi$ are given as pairs $(p,q) \in P^2$, where
$p$ is the center and $q$ a point on the boundary of the range.

Each range in $O_\pi \cup I_\pi$ covers a boundary point of
$P_\pi$.
Since it is known that there is always an assignment with interference
$O(\log n)$ (see \cite{RickenbachWaZo09} and Observation~\ref{obs:nna_upper}),
no point of $P$ lies in more than $O(\log n)$
ranges of $B_P(N^*)$. Thus, we can assume that
$|I_\pi \cup O_\pi| = O(\log n)$, and the total number
of subproblems is $n^{O(\log n)}$.

A subproblem $\pi$ can be solved recursively as follows.
Let $A$ be the points in $P_\pi$ to the left of $r_\pi$,
and $B$ the points in $P_\pi$ to the right of $r_\pi$.
We enumerate all pairs $(\sigma, \rho)$ of subproblems with
$P_{\sigma} = A$ and $P_{\rho} = B$,
and we connect the roots $r_\sigma$ and $r_\rho$ to $r_\pi$.
Then we check whether $I_\pi$, $O_\pi$, $I_\sigma$, $O_\sigma$,
$I_\rho$, and $O_\rho$ are \emph{consistent}.
This means that  $O_\sigma$ contains all ranges from $O_\pi$ with center in
$A$ plus the range for the edge $r_\sigma r_\pi$
(if it does not lie in $O_\pi$ yet).
Furthermore,
$O_\sigma$ may contain additional ranges with center in $A$ that cover
points in $P_\pi \setminus A$ but not in $P \setminus P_\pi$.
The set $I_\sigma$ must contain all ranges in $I_\pi$ and $O_\rho$
that cover points in $A$, as
well as the range from $O_\pi$ with center $r_\pi$, if it exists and if it
covers a point in $A$.
The conditions for $\rho$ are analogous.

Let $N_\pi$ be the valid assignment for $\pi$ obtained
by taking optimal valid assignments $N_\sigma$ and $N_\rho$
for $\sigma$ and $\rho$ and
by adding edges from $r_\sigma$ and $r_\rho$ to $r_\pi$.
The interference of $N_\pi$ is then defined with respect to
the ranges in $B_{P_\pi}(N_\pi) \cup I_\pi$ plus the range with
center $r_\pi$ in $O_\pi$ (the other ranges of $O_\pi$ must lie in
$B_{P_{\pi}}(N_\pi)$. We take the
pair $(\sigma, \rho)$ of subproblems which minimizes this interference.
This step takes $n^{O(\log n)}$ time, because the number
of subproblem pairs is $n^{O(\log n)}$ and the overhead per pair is
polynomial in $n$.

The recursion ends if $P_\pi$ contains a single point $r_\pi$. If $O_\pi$
contains only one range, namely the edge from $r_\pi$ to its parent, the
interference of $\pi$ is given by $|I_\pi|+1$.
If $O_\pi$ is empty or contains more than one range, then
the interference for $\pi$ is $\infty$.

To find the overall optimum, we start the recursion with
$P_\pi = P$, $O_\pi=I_\pi=\emptyset$ and every possible root, taking the
minimum of all results. By implementing the recursion with dynamic programming,
we obtain the following result.

\begin{theorem}\label{thm:qp_algo}
Let $P \subset \R$ with $|P| = n$. The optimum interference of $P$ can be
found in time $n^{O(\log n)}$.\qed
\end{theorem}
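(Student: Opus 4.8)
The plan is to prove correctness and the running-time bound separately, treating the preceding description as the definition of the dynamic program. By Theorem~\ref{thm:BST} it suffices to search over valid assignments with the BST-property, and in such an assignment every node $r$ partitions its descendants into the consecutive interval of smaller coordinates and the consecutive interval of larger coordinates. This is exactly the decomposition the recursion performs, so the subproblem structure faithfully mirrors the search over such assignments. The core is an induction on $|P_\pi|$ showing that the value computed for a subproblem $\pi$ equals the minimum, over all valid assignments $N$ for $P_\pi$ with the BST-property, rooted at $r_\pi$ and realizing the prescribed interface $(I_\pi, O_\pi)$, of the interference of $N$ measured against $B_{P_\pi}(N) \cup I_\pi \cup O_\pi$.

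For the forward direction I would take an optimal assignment $N^*$ for $P$ with the BST-property (which exists by Theorem~\ref{thm:BST}) and read off, at each tree node $r_\pi$, the induced subproblem: $P_\pi$ is the set of descendants of $r_\pi$, the set $I_\pi$ collects the ranges centered outside $P_\pi$ that reach into it, and $O_\pi$ collects the ranges centered in $P_\pi$ that reach outside, together with the range centered at $r_\pi$. The consistency conditions are then just the bookkeeping rules that hold automatically for these induced interfaces: a range counted as outgoing for $\sigma$ but landing inside $B$ becomes incoming for $\rho$, and symmetrically. Conversely, given a consistent triple $(\pi,\sigma,\rho)$ together with valid assignments for $\sigma$ and $\rho$, adding the edges $r_\sigma r_\pi$ and $r_\rho r_\pi$ yields a valid assignment for $P_\pi$---acyclicity and reachability of $r_\pi$ follow from the induction hypothesis---and the consistency rules guarantee that each range covering a point of $P_\pi$ is accounted for exactly once. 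The base case is the single-point subproblem, handled explicitly in the description.

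The running-time bound comes from counting. A subproblem is specified by a consecutive interval $P_\pi$ ($O(n^2)$ choices), a root ($n$ choices), and the interface sets. Each range in $I_\pi \cup O_\pi$ is a pair in $P^2$, of which there are $n^2$, and by the $O(\log n)$ interference bound from~\cite{RickenbachWaZo09} (see also Observation~\ref{obs:nna_upper}) no optimal assignment places more than $O(\log n)$ ranges over any point; hence we may restrict to $|I_\pi \cup O_\pi| = O(\log n)$, giving $\binom{n^2}{O(\log n)} = n^{O(\log n)}$ choices for the interface. Thus the number of subproblems is $n^{O(\log n)}$. Solving one subproblem enumerates all pairs $(\sigma,\rho)$---again $n^{O(\log n)}$ of them---with polynomial overhead per pair for the consistency check and the interference computation. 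Memoizing over the subproblem table turns the recursion into dynamic programming, so the total time is $n^{O(\log n)} \cdot n^{O(\log n)} = n^{O(\log n)}$.

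I expect the main obstacle to be the precise accounting in the correctness proof: one must verify that the interface sets capture neither too little nor too much. The delicate points are that a range crossing the boundary between $A$ and $B$ is recorded once as outgoing on one side and once as incoming on the other, so that its contribution at each covered point is counted exactly once upon merging; and the special handling of the range centered at $r_\pi$, which may simultaneously be the parent edge of $r_\pi$ seen from above (in $O_\pi$) and a source of interference inside $A$ or $B$ (and must then appear in the appropriate $I_\sigma$ or $I_\rho$). Getting this bookkeeping exactly right, and matching it against the induced interfaces of an optimal BST-assignment, is where the argument requires care; the remaining steps are routine.
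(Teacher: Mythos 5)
Your proposal follows essentially the same route as the paper: the theorem's proof there is precisely the preceding dynamic-programming description (interval subproblems with root and incoming/outgoing interference sets of size $O(\log n)$, consistency checks when merging left and right subproblems at the root, and the $n^{O(\log n)}$ counting argument), which you reproduce and correctly supplement with the induction establishing correctness via Theorem~\ref{thm:BST} and the explicit subproblem count $\binom{n^2}{O(\log n)} = n^{O(\log n)}$. The only detail worth making explicit is the paper's observation that every range in $I_\pi \cup O_\pi$ must cover one of the two boundary points of $P_\pi$, which is what lets the pointwise $O(\log n)$ interference bound cap $|I_\pi \cup O_\pi|$.
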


Theorem~\ref{thm:qp_algo} can be improved slightly. The number of subproblems
depends on the maximum number of transmission ranges that cover the boundary points
of $P_\pi$ in an optimum assignment. This number is bounded by the optimum
interference of $P$.  Using exponential search, we get the following
theorem.
\begin{theorem}
Let $P \subset \R$ with $|P|= n$. The optimum
interference $\textup{OPT}$ for $P$ can be found in time
$n^{O(\textup{OPT})}$.\qed
\end{theorem}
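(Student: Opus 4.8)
The plan is to reuse the dynamic program behind Theorem~\ref{thm:qp_algo} essentially verbatim, replacing only the crude bound $|I_\pi \cup O_\pi| = O(\log n)$ by a bound in terms of the optimum itself. The starting observation is that in the BST-optimal assignment guaranteed by Theorem~\ref{thm:BST}, every point of $P$ is covered by at most $\textup{OPT}$ transmission ranges. Since each $P_\pi$ is an interval of \emph{consecutive} points, every range in $I_\pi \cup O_\pi$ crosses the boundary of $P_\pi$ and hence covers one of its two endpoints; therefore $|I_\pi \cup O_\pi| \le 2\,\textup{OPT} + O(1)$ for the subproblems that actually arise in this optimal assignment, the additive constant accounting for the single allowed range with center $r_\pi$.

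First I would define, for a parameter $k$, a restricted program $\textup{DP}_k$ that is identical to the one above except that only subproblems $\pi$ with $|I_\pi \cup O_\pi| \le 2k + O(1)$ are generated. The number of such subproblems is $n^{O(k)}$, so with the polynomial overhead per subproblem pair the whole program runs in $n^{O(k)}$ time. Let $\textup{DP}_k$ also denote the interference value it returns, i.e. the minimum over all valid assignments whose BST decomposition respects the size restriction. Two monotone facts drive the analysis: $\textup{DP}_k \ge \textup{OPT}$ always, because the program only ever outputs the interference of valid assignments; and if $\textup{OPT} \le k$ then $\textup{DP}_k = \textup{OPT}$, because the BST-optimal assignment then satisfies the size restriction and is feasible for $\textup{DP}_k$.

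Next I would run $\textup{DP}_k$ for $k = 1, 2, 4, 8, \dots$ and stop at the first power of two with $\textup{DP}_k \le k$, returning that value. Correctness follows from the two facts above: if $\textup{DP}_k \le k$ then $\textup{OPT} \le \textup{DP}_k \le k$, so $\textup{DP}_k = \textup{OPT}$ and the output is correct; conversely, as soon as $k \ge \textup{OPT}$ we have $\textup{DP}_k = \textup{OPT} \le k$, so the search terminates no later than the first power of two exceeding $\textup{OPT}$, for some $k^\ast \le 2\,\textup{OPT}$. The running time is a geometric sum $\sum_i n^{O(2^i)}$ up to $k^\ast$, dominated by its last term $n^{O(k^\ast)} = n^{O(\textup{OPT})}$.

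The step I expect to require the most care is the accounting behind the claim that pruning to $|I_\pi \cup O_\pi| \le 2k + O(1)$ loses nothing once $k \ge \textup{OPT}$. One must argue that the \emph{entire} BST decomposition of an optimal assignment consists of subproblems of this bounded size, not merely the top-level call: at every recursive split of $P_\pi$ into $A$ and $B$, the new boundary is formed by an endpoint of $P_\pi$ together with the splitting root $r_\pi$, each covered by at most $\textup{OPT}$ ranges in the optimal assignment, so the invariant propagates down the whole decomposition tree and $\textup{DP}_k$ can reconstruct the optimum. Getting the additive constant right — in particular the bookkeeping of the single range with center $r_\pi$ permitted in $O_\pi$ — is what pins down the precise constant hidden in $n^{O(\textup{OPT})}$.
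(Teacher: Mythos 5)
Your proposal is correct and follows essentially the same route as the paper: the paper likewise reuses the dynamic program of Theorem~\ref{thm:qp_algo}, bounds the number of ranges covering the boundary points of each $P_\pi$ by the optimum interference rather than by $O(\log n)$, and handles the fact that $\textup{OPT}$ is unknown via exponential (doubling) search. The details you supply---the factor-two accounting over the two endpoints of $P_\pi$, the propagation of the size bound down the decomposition tree, and the correctness of the stopping rule $\textup{DP}_k \le k$---are precisely what the paper's brief remark leaves implicit.
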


\section{Further Structural Properties in One Dimension}\label{sec:further}

In this section, we explore further structural properties of
optimal valid receiver assignments for one-dimensional point sets.
It is well known that for any $n$-point set $P$, there always exists a valid assignment $\widetilde{N}$ with
$I(\widetilde{N}) = O(\log n)$. Furthermore, there exist point sets such that
any valid assignment $N$ for them must have $I(N) = \Omega(\log n)$~\cite{RickenbachWaZo09}.
For completeness, we include proofs for these facts in Section~\ref{sec:nna}. 
Below we show that there may be an arbitrary number
of left-right turns in an optimal solution. To the best of our knowledge, this result is new, and
it shows that in a certain sense, Theorem~\ref{thm:BST} cannot be improved.

In Theorem~\ref{thm:BST} we proved that there always exists an
optimal solution with the BST-property. Now, we will
show that the structure of an optimal solution cannot
be much simpler than that. Let $P \subset \R$ be finite
and let $N$ be a valid receiver assignment for $P$. A \emph{bend} in $G_P(N)$
is an edge between two non-adjacent points. We will show that
for any $k$ there is a point set $Q_k$ such that any optimal
assignment for $Q_k$ has at least $k$ bends.

For this, we inductively define sets $P_0$, $P_1$, $\ldots$ as follows.
For each $P_i$, let $\ell_i$ denote the diameter of $P_i$.
$P_0$ is just the origin (and $\ell_0 = 0$). Given $P_i$, we let $P_{i+1}$
consist of two copies of $P_i$, where the second copy is translated
by $2\ell_i + 1$ to the right, see Fig.~\ref{fig:lowerb}. By induction,
it follows that
$|P_i| = 2^i$ and $\ell_i = (3^i-1)/2$.

\begin{figure}
\centering
\includegraphics[scale=0.8]{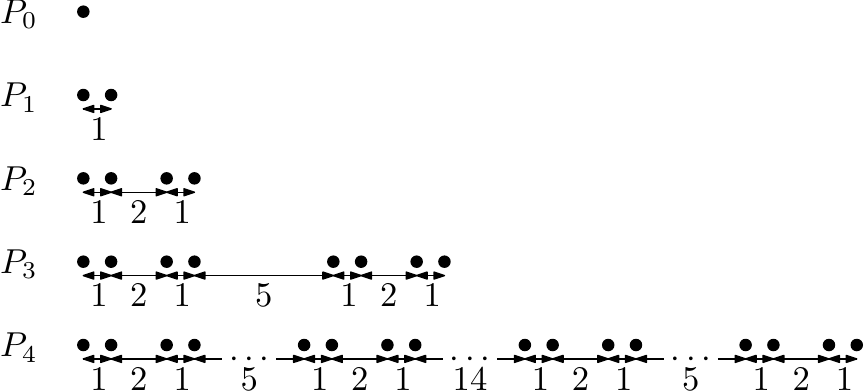}
\caption{Inductive construction of $P_i$.}
\label{fig:lowerb}
\end{figure}

\begin{prop}\label{prop:lowerb}
Every valid assignment for $P_i$ has interference at least
$i$.
\end{prop}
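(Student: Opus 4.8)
The plan is to prove Proposition~\ref{prop:lowerb} by induction on $i$, exploiting the recursive doubling structure of $P_i$. The statement is trivial for small $i$ (any valid assignment has interference at least $1$ as soon as there is an edge, and for $i=0$ or $i=1$ the bound $i$ is vacuous or immediate), so the content is in the inductive step.

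First I would set up the induction hypothesis: every valid assignment for $P_{i}$ has interference at least $i$. Now consider $P_{i+1}$, which by construction is the disjoint union of a left copy $L$ and a right copy $R$ of $P_i$, separated by a gap of width $2\ell_i+1$ that strictly exceeds $\ell_i + \ell_i = 2\ell_i$, the combined diameters of the two copies. Let $N$ be a valid assignment for $P_{i+1}$ with sink $r$; without loss of generality suppose $r \in R$. The key structural observation is that because $G_{P_{i+1}}(N)$ is connected toward $r$, there must be at least one edge crossing the gap from $L$ into $R$ (some point of $L$ must eventually reach the sink). Pick such a crossing edge $pq$ with $p \in L$ and $q \in R$; its transmission range $B(p, \|p-q\|)$ has radius at least the gap width $2\ell_i + 1$, and since $p$ lies in $L$ this disk covers all of $L$ — in particular it covers every point of the left copy.

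The crux is then to charge this single crossing range against a good recursive assignment on the left copy. Restrict $N$ to $L$: the induced subgraph on $L$ is almost a valid assignment for the left copy $P_i$, except that one point of $L$ (namely $p$, or more generally the points whose $N$-edge leaves $L$) sends its edge across the gap rather than staying inside $L$. Here I would argue that the interference witnessed \emph{inside} $L$ under $N$ is at least the interference of the best valid assignment on the isolated copy $P_i$, which by the induction hypothesis is at least $i$; the point of $L$ realizing interference $\ge i$ lies in $L$, and on top of those $\ge i$ ranges it is also covered by the crossing range $B(p,\|p-q\|)$ identified above, giving interference at least $i+1$ at that witness point. The main obstacle will be making this last charging argument fully rigorous: one must verify that the ranges counted by the induction hypothesis on $L$ are genuinely present (with the same or larger radii) in the full assignment $N$, that the crossing range is distinct from all of them, and that redirecting the one or two edges that leave $L$ does not decrease the interference at the interior witness point — essentially showing that the restriction of $N$ to $L$ can be completed to a valid assignment on $P_i$ without reducing its internal interference below $i$, so that the inductive bound applies to a point that is additionally covered by the gap-crossing disk.
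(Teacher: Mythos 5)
Your strategy is the same as the paper's: induct on $i$, split $P_{i+1}$ into the two copies $L$ and $R$ of $P_i$, assume the sink lies in $R$, note that the ball of any edge crossing from $L$ to $R$ covers all of $L$, and then reduce to a valid assignment on $L$ so that the induction hypothesis plus one extra crossing ball gives the bound. (Minor point: the empty gap between the copies is $\ell_i+1$, not $2\ell_i+1$ --- the latter is the translation amount --- but since $\ell_i+1$ still exceeds the diameter $\ell_i$ of a copy, your conclusion about crossing balls is unaffected.) However, you stop exactly where the real work is, and the charging step as you literally state it has a genuine flaw.

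The flaw is the double-counting you yourself flag but do not resolve. You pick an \emph{arbitrary} crossing edge $pq$ and add its ball ``on top of'' the $\geq i$ ranges certified by the inductive witness. But each point owns exactly one transmission range, so if the completed assignment $N'$ on $L$ gives $p$ an internal edge (which it must, unless $p$ is the sink of $N'$), then $p$'s range under $N'$ may be among the $\geq i$ ranges at the witness, and its counterpart in $N$ is precisely $B(p,\|p-q\|)$ --- counting that ball again is illegitimate. The fix, which is exactly the paper's construction, is to tie the charged ball to the sink of the completion: choose one crossing point $q^*$ (the paper takes the last vertex of $L$ on the path from a fixed point of $L$ to the root; any crossing point works), delete its crossing edge so that $q^*$ becomes the sink, and redirect every other crossing edge $ab$ with $a \in L$, $b \in R$ to the edge $aq^*$. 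Since $\|a-q^*\| \leq \ell_i < \|a-b\|$, each redirected ball is contained in the original one, so $N'$ is valid on $L$, every range of $N'$ is dominated by the range of $N$ with the same center, and $q^*$ has \emph{no} range in $N'$. Hence every point of $L$ has interference under $N$ at least its interference under $N'$ plus $1$, the $+1$ coming from $q^*$'s crossing ball, which covers all of $L$ and is counted nowhere in $N'$; induction then gives $I(N) \geq I(N')+1 \geq i+1$. Note also that your phrase ``the one or two edges that leave $L$'' understates the problem: \emph{all} points of $L$ could have crossing edges, so the completion may have to build the assignment on $L$ essentially from scratch --- the uniform redirection above handles this, but your sketch as written does not.
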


We give the proof of Proposition~\ref{prop:lowerb} in Appendix~\ref{sec:omitted}.

\begin{lemma}\label{lem:opt_ass}
For $i \geq 1$, there exists a valid assignment $N_i$ for
$P_i$ that achieves interference $i$. Furthermore, $N_i$ can
be chosen with the following properties: (i) $N_i$ has the \textup{BST}-property;
(ii) the leftmost or the rightmost point of $P_i$ is the root
of $G_{P_i}(N_i)$; (iii) the interference at the root is
$1$, the interference at the other extreme point of $P_i$ is
$i$.
\end{lemma}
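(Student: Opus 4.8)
The plan is to prove Lemma~\ref{lem:opt_ass} by induction on $i$, building the assignment $N_i$ from two copies of $N_{i-1}$ exactly mirroring the recursive construction of $P_i$. The base case $i=1$ has $P_1$ equal to two points, and the obvious assignment (the left point pointing to the right point, with the right point as root) gives interference $1$, satisfies the BST-property trivially, and has the root as the rightmost point with interference $1$ at the root and $1$ at the other extreme; I would check that this matches properties (i)--(iii). The heart of the argument is the inductive step, where I assume $N_{i-1}$ exists for both copies and combine them.

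For the inductive step, recall that $P_i$ is the left copy $L$ of $P_{i-1}$ together with the right copy $R$ of $P_{i-1}$ shifted by $2\ell_{i-1}+1$. By the induction hypothesis, I would orient the left copy so that \emph{its rightmost point} $a$ is the root of $G_L(N_{i-1})$ (interference $1$ at $a$, interference $i-1$ at the leftmost point of $L$), and orient the right copy so that \emph{its leftmost point} $b$ is the root (interference $1$ at $b$, interference $i-1$ at the rightmost point of $R$); property (ii) of the hypothesis lets me choose which extreme point is the root. I then join the two subtrees by adding a single edge from one root to the other, say from $a$ to $b$, making $b$ the global root. This added edge is a bend (an edge between the two non-adjacent points $a$ and $b$), and the transmission range it creates, the disk centered at $a$ with $b$ on its boundary, has radius $\ell_{i-1}+\varepsilon$-free distance exactly $2\ell_{i-1}+1$, which by the gap construction reaches across $L$ but I must verify it does not over-cover. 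The key verification is the interference bookkeeping: the new range $B(a,\|a-b\|)$ covers every point of $L$ (since $a$ is the rightmost point of $L$ and the whole copy lies within distance $\ell_{i-1} < \|a-b\|$), adding $1$ to the interference of every point of $L$, but it covers \emph{no} point of $R$ beyond $b$ itself because the next point of $R$ lies at distance more than $\|a-b\|$ from $a$ by the choice of gap $2\ell_{i-1}+1$. I would confirm this covering claim carefully, as it is where the spacing $2\ell_i+1$ is used.

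The interference count then works out as follows: inside $L$, each point's interference rises by exactly $1$ (from the single new range centered at $a$), so the maximum over $L$ becomes $(i-1)+1 = i$, attained at the leftmost point of $L$, which is the leftmost point of $P_i$; inside $R$, nothing changes, so the maximum over $R$ stays $i-1$, well below $i$. Thus $I(N_i)=i$, matching Proposition~\ref{prop:lowerb}'s lower bound, so $N_i$ is optimal. For the structural properties: (ii) holds because $b$, the global root, is the rightmost point of $P_i$; (iii) holds because the root $b$ still has interference $1$ (only its own outgoing-free status plus the nearest-neighbor range within $R$) while the opposite extreme, the leftmost point of $P_i$, now carries interference $i$. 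Finally I must confirm the BST-property (i) is preserved: since $L$ and $R$ occupy disjoint intervals separated by the gap, the subtree rooted in $L$ spans exactly the coordinates of $L$ and the subtree in $R$ spans exactly $R$, so no interval $[p_l,p_r]$ of descendants picks up a stranger, and the added edge $ab$ does not violate the ``at most one child on each side'' condition at $b$ because $a$ is $b$'s only new (left) child.

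The main obstacle I anticipate is the precise interference accounting at the junction, specifically ensuring the new long range centered at $a$ covers all of $L$ but none of $R$'s interior, and symmetrically that no pre-existing range from $R$ creeps leftward past $b$ into $L$ (and vice versa). This is exactly what the gap $2\ell_i+1$ is engineered to guarantee: the minimum inter-copy distance $2\ell_{i-1}+1$ strictly exceeds the diameter $\ell_{i-1}$ of either copy, so every transmission range internal to one copy stays within that copy, and the only cross-copy range is the single bend $ab$. I would state this separation as the crucial sublemma and verify it from the inequality $2\ell_{i-1}+1 > \ell_{i-1}$, which holds for all $i\geq 1$.
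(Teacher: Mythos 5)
Your inductive skeleton matches the paper's (build $N_i$ from two copies of $N_{i-1}$, joined by a single cross edge whose ball swallows one whole copy), but the way you join the copies has a genuine flaw that breaks properties (ii) and (iii) — and with them, the induction itself. You root the left copy $L$ at its rightmost point $a$, root the right copy $R$ at its leftmost point $b$, add the edge $ab$, and declare $b$ the global root. But $b$ is the leftmost point of $R$, i.e., an \emph{interior} point of $P_i$ (all of $L$ lies to its left, the rest of $R$ to its right), so your later claim that ``$b$ is the rightmost point of $P_i$'' is false and property (ii) fails. Property (iii) fails as well: the transmission range $B(a,\|a-b\|)$ is a closed ball with $b$ on its boundary, so it covers $b$, giving your root interference $2$, not $1$. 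These are not cosmetic defects: your own inductive step invokes properties (ii) and (iii) of the hypothesis (to root each copy at an extreme point of your choosing, with interference $1$ there), so an $N_i$ whose root is interior cannot be fed back into the construction at stage $i+1$; the induction does not close. (Two smaller slips: $\|a-b\|$ equals the gap $\ell_{i-1}+1$, not the translation amount $2\ell_{i-1}+1$; and $a$, $b$ are consecutive points of $P_i$, so $ab$ is not a bend.)

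The repair is the paper's junction rule: orient \emph{both} copies the same way — say both with their leftmost point as root — and connect the root of $R$ (its leftmost point) to the \emph{rightmost point of $L$}, rather than root-to-root. The ball of this edge has radius $\ell_{i-1}+1$, so it covers all of $R$ but only the single facing point of $L$: every interference in $R$ rises by $1$, in $L$ only the rightmost point rises by $1$, and the global root — the leftmost point of $L$, hence of $P_i$ — keeps interference $1$ and is an extreme point. This yields interference $i$, preserves the BST-property, and restores exactly the invariants (ii) and (iii) needed to continue the induction; the assignment rooted at the rightmost point is obtained symmetrically.
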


\begin{proof}
We construct $N_i$ inductively. The point set $P_1$
has two points at distance $1$, so any valid
assignment has the claimed properties.

Given $N_i$, we construct $N_{i+1}$: recall that $P_{i+1}$
consists of two copies of $P_i$ at distance
$\ell_i + 1$. Let $L$ be the left and $R$ the right copy.
To get an assignment $N_{i+1}$ with the leftmost point
as root, we use the assignment $N_i$ with the left point as root
for $L$ and for $R$,
and we connect the root of $R$ to the rightmost
point of $L$. This yields a valid assignment.
Since the distance between $L$ and $R$
is $\ell_i + 1$, the interference for all points in
$R$ increases by $1$.  The interferences for
$L$ do not change, except for the rightmost point,
whose interference increases by $1$. Since $|L| \geq 2$,
the desired properties follow by induction.
The assignment with the rightmost point as root is constructed
symmetrically.
\qed{}
\end{proof}

The point set $Q_k$ is constructed recursively.
$Q_0$ consists of a single point $a = 0$ and a copy $R_2$
of $P_2$ translated to the right by $\ell_2 + 1$ units.
Let $d_{k-1}$ be the diameter of $Q_{k-1}$.
To construct $Q_k$ from $Q_{k-1}$, we add a copy $R_{k+2}$ of
$P_{k+2}$, at distance $d_{k-1} + 1$ from $Q_k$. If $k$ is odd,
we add $R_{k+2}$ to the left, and if $k$ is even,
we add $R_{k+2}$ to the right; see Fig.~\ref{fig:bends}.
\begin{figure}
\centering
\includegraphics[scale=0.7]{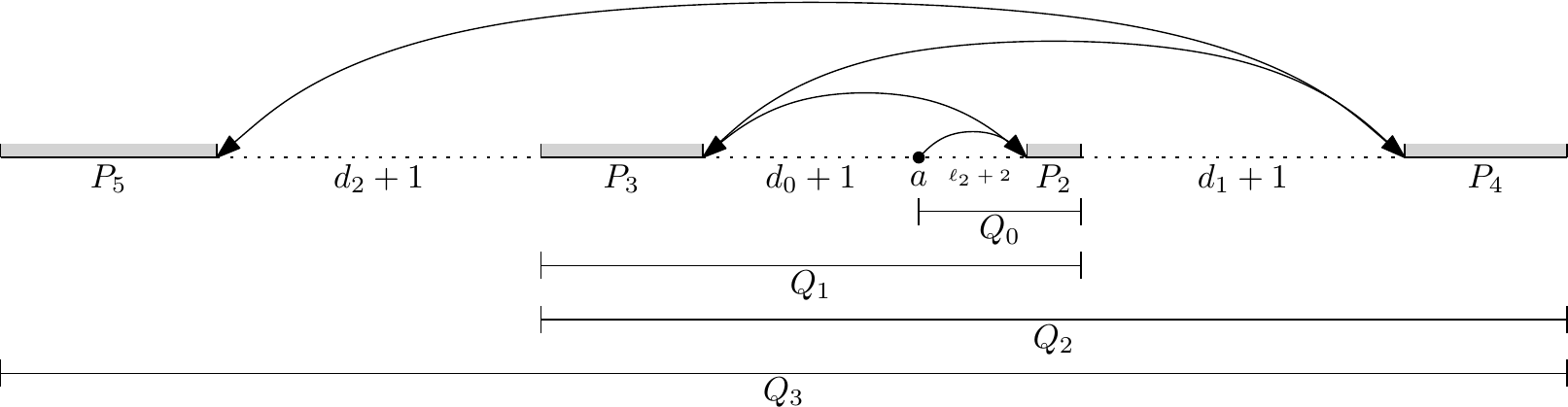}
\caption{The structure of $Q_3$. The arrows indicate the
bends of an optimal assignment.}
\label{fig:bends}
\end{figure}

\begin{theorem}
We have the following properties: (i) the diameter
$d_k$ is $(3^{k+3}-2^{k+3}-1)/2$;
(ii) the optimum interference of $Q_k$ is $k + 2$; and (iii) every optimal
assignment for $Q_k$ has at least $k$ bends.
\end{theorem}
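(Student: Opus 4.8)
The plan is to establish the three parts largely independently, deriving (i) by a direct recurrence, (ii) by combining a lower bound with a matching explicit construction, and (iii) by an exchange/structural argument that forces bends at the junctions between the blocks $R_{k+2}$.

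For part (i), I would set up a recurrence for the diameter $d_k$ of $Q_k$. Since $Q_k$ is obtained from $Q_{k-1}$ by attaching a copy $R_{k+2}$ of $P_{k+2}$ at distance $d_{k-1}+1$, and since $\ell_{k+2} = (3^{k+2}-1)/2$ is the diameter of $P_{k+2}$, we get $d_k = d_{k-1} + 1 + \ell_{k+2} = d_{k-1} + 1 + (3^{k+2}-1)/2$. With the base case $d_0 = \ell_2 + 1 = 5$ (the single point $a$ together with the translated copy $R_2$), I would solve this linear recurrence. The sum $\sum_{j=2}^{k+2} 3^{j}$ is a geometric series, and after collecting the constant and the $2^{k+3}$ contributions one checks it telescopes to the claimed closed form $d_k = (3^{k+3} - 2^{k+3} - 1)/2$; this step is routine bookkeeping.

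For part (ii), the lower bound $\mathrm{OPT}(Q_k) \geq k+2$ follows immediately from Proposition~\ref{prop:lowerb} applied to the embedded copy $R_{k+2}$ of $P_{k+2}$: any valid assignment for $Q_k$ restricts to one whose interference on $R_{k+2}$ is at least $k+2$, and interference can only be larger in the full point set. For the upper bound I would build an assignment of interference exactly $k+2$ by induction on $k$, using Lemma~\ref{lem:opt_ass} to assign each block $R_{j+2}$ internally with interference $j+2$ and its root at whichever extreme point faces the rest of $Q_k$. The crucial point is that the gap $d_{k-1}+1$ between a new block and the old point set is strictly larger than everything to one side, so each connecting edge from a block's root to the previously constructed part adds interference only along a region already dominated by the larger block; the alternating left/right placement is exactly what keeps the maximum at $k+2$ rather than accumulating. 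I would verify by the inductive invariant (mirroring Lemma~\ref{lem:opt_ass}(iii)) that the interference at the connecting endpoint stays controlled.

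Part (iii) is the main obstacle and the heart of the theorem. Here I would invoke Theorem~\ref{thm:BST} to assume an optimal assignment with the BST-property, so that $G_P(N)$ is a binary search tree on the coordinates and every subtree spans a contiguous interval. I would then argue that in any optimal assignment the $k+1$ blocks $\{a, R_2, R_3, \dots, R_{k+2}\}$ must be stitched together, and that the single edge joining consecutive blocks is necessarily a bend, because consecutive blocks are separated by a gap of size at least $d_{j-1}+1$ with no points inside it, so an edge crossing that gap connects two non-adjacent points of $P$. The delicate step is ruling out that an optimal assignment could route the strong-connectivity edges so as to avoid bends, or concentrate them to fewer than $k$ distinct gaps: I expect to use the interference budget $k+2$ together with Proposition~\ref{prop:lowerb} to show that within each block the internal structure is essentially forced to spend its full budget, leaving no slack to absorb an extra incoming edge except through exactly one bend per gap. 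Counting the $k$ gaps created by the $k$ recursive attachments then yields at least $k$ bends, completing the proof.
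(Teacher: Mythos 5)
Your proposal has genuine errors in parts (i) and (iii), and a gap in (ii). For (i), your recurrence is wrong: the new block $R_{k+2}$ is attached at distance $d_{k-1}+1$ from $Q_{k-1}$, so the new diameter is $d_{k-1} + (d_{k-1}+1) + \ell_{k+2} = 2d_{k-1}+1+\ell_{k+2}$, not $d_{k-1}+1+\ell_{k+2}$; similarly the base case is $d_0 = 2\ell_2+1 = 9$ (point $a$, gap $\ell_2+1$, plus the extent $\ell_2$ of $R_2$ itself), not $5$. Your recurrence yields $d_1 = 19$, while the claimed closed form gives $(3^4-2^4-1)/2 = 32$; it does not telescope to the stated formula, whereas the correct recurrence $d_k = 2d_{k-1}+1+\ell_{k+2}$, $d_0=9$, does. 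For (ii), the lower bound is not "immediate": a valid assignment for $Q_k$ does \emph{not} restrict to a valid assignment for $R_{k+2}$, since points of $R_{k+2}$ may have their receivers outside $R_{k+2}$ and the root may lie outside as well. The paper repairs this by redirecting every edge leaving $R_{k+2}$ to the last point $q$ of $R_{k+2}$ on the path from the leftmost point of $R_{k+2}$ to the root, producing a valid assignment of $R_{k+2}$ with sink $q$ to which Proposition~\ref{prop:lowerb} applies. This is fixable, but it is a real step, not a restriction.

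Part (iii) is where your plan breaks down. First, you cannot invoke Theorem~\ref{thm:BST}: claim (iii) is about \emph{every} optimal assignment, so normalizing to one with the BST-property proves nothing about the others. Second, your key assertion --- that any edge crossing an empty gap ``connects two non-adjacent points of $P$'' and is therefore a bend --- is false: the edge joining the two points that bound a gap connects \emph{consecutive} points of $P$ and is not a bend, so blocks can in principle be stitched together bend-free. The actual mechanism in the paper is an induction exploiting the alternating placement: one first shows the root of any optimal $N$ must lie in $R_{k+2}$ (otherwise the outgoing edge from the last point of $R_{k+2}$ on the path to the root, combined with the interference-$(k+2)$ lower bound inside $R_{k+2}$, forces $I(N) \geq k+3$); then for any $b \notin R_{k+2}$ with $N(b) \in R_{k+2}$, redirecting the edges that leave $Q_{k-1}$ to $b$ yields an assignment of $Q_{k-1}$ with interference at most $k+1$, hence optimal, whose root by induction lies in $R_{k+1}$; therefore $b \in R_{k+1}$. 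Since the construction alternates sides, $R_{k+1}$ and $R_{k+2}$ sit at opposite ends of $Q_k$, so the edge from $b$ skips over all of $Q_{k-2}$ and is genuinely a bend, and the $k-1$ bends inherited by induction remain bends of $N$. Your ``interference budget leaves no slack'' argument does not supply this chain --- in particular it never explains why the connecting points are forced into $R_{k+1}$, which is exactly what makes the bends unavoidable.
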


\begin{proof}
By construction, we have $d_0 = 9$ and  $d_k = 2d_{k-1} + 1 + \ell_{k+2}$,
for $k \geq 1$.  Solving the recursion yields the claimed bound.

In order to prove (ii), we first exhibit an assignment
$N$ for $Q_k$ that achieves interference
$k+2$. We construct $N$ as follows: first, for
$i = 2, \dots, k+1$, we take for $R_i$
the assignment $N_i$ from
Lemma~\ref{lem:opt_ass} whose root is
the closest point of $P_i$ to $a$. Then, we
connect $a$ to the closest point in $R_2$, and
for $i = 2, \dots, k+1$,
we connect the root of $R_i$ to the root of
$R_{i+1}$. Using the properties
from Lemma~\ref{lem:opt_ass}, we can check
that this
assignment has interference $k+2$.

Next, we show that all valid assignments for
$Q_k$ have interference at least $k+2$.
Let $N$ be an assignment for $Q_k$.
Let $p$ be the leftmost point of
$R_{k+2}$, and let $q$ be the last point on the
path from $p$ to the root of $N$ that lies
in $R_{k+2}$.
We change the assignment $N$ such that all
edges leaving $R_{k+2}$ now go to
$q$. This yields a valid assignment $\widetilde{N}$
for $R_{k+2}$ with root $q$. Thus, $I(\widetilde{N}) \geq k+2$,
by Proposition~\ref{prop:lowerb}. Hence,
by construction, $I(N) \geq I(\widetilde{N}) \geq k+2$, since
$d_k \geq \ell_{k+2}$.

For (iii), let $N$ be an optimal assignment for $Q_k$.
We prove by induction that the root of
$N$ lies in $R_{k+2}$, and that $N$ has $k$ bends,
all of which originate outside of $R_{k+2}$.
As argued above, we have $I(N) = k+2$.
As before, let $p$ be the leftmost point of $R_{k+2}$
and $q$ the last point on the path from $p$ to
the root of $G_{Q_k}(N)$. Suppose that $q$ is not
the root of $N$. Then $q$ has an outgoing edge that
increases the interference of all points in $R_{k+2}$
by $1$. Furthermore, by constructing a valid assignment
$\widetilde{N}$ for $R_{k+2}$ as in the previous paragraph,
we see that the interference in $N$ of all edges that originate
from $P_{k+2} \setminus q$
is at least $k+2$. If follows that $I(N) \geq k + 3$,
although $N$ is optimal.

Thus, the root $r$ of $N$ lies in $R_{k+2}$. Let $b$ be a
point outside $R_{k+2}$ with $N(b) \in R_{k+2}$.
The outgoing edge from $b$ increases
the interference of all points in $Q_k \setminus R_{k+2}$
by $1$. Furthermore, we can construct a valid assignment
$\widehat{N}$ for $Q_k \setminus R_{k+2}$ by redirecting all
edges leaving $Q_{k-1}$ to $b$. By construction, $I(\widehat{N}) \leq k+1$,
so by (ii), $\widehat{N}$ is optimal for $Q_{k-1}$
with interference $k+1$. By induction, $\widehat{N}$ has its root
in $R_{k+1}$ and has at least $k-1$ bends, all of which originate
outside $R_{k+1}$. Thus, $b$ must lie in $R_{k+1}$.
Since $b$ was arbitrary, it follows that all bends of $\widehat{N}$
are also bends of $N$. The edge from $b$ in $N$ is
also a bend, so the claim follows.
\qed{}
\end{proof}

\section{Conclusion}

We have shown that interference minimization in two-dimensional planar sensor
networks is NP-complete. In one dimension, there exists an algorithm
that runs in quasi-polynomial time, based on the fact that there are always
optimal solutions with the BST-property. Since it is generally believed that
NP-complete problems do not have quasi-polynomial algorithms, our result indicates
that one-dimensional interference minimization is probably not NP-complete. However,
no polynomial-time algorithm for the problem is known so far. Furthermore,
our structural result in Section~\ref{sec:further} indicates that optimal solutions
can exhibit quite complicated behavior, so further ideas will be necessary for a
better algorithm.

\section*{Acknowledgments}
We would like to thank Maike Buchin, Tobias Christ, Martin Jaggi, Matias Korman, Marek Sulovsk\'y,
and Kevin Verbeek for fruitful discussions.

\bibliographystyle{abbrv}
\bibliography{interfere}

\newpage
\appendix
\section{Omitted Proofs}\label{sec:omitted}

\begin{proof}[of Lemma~\ref{lem:outedge}]
By Lemma~\ref{lem:gadget_NN}, both
$M$ and $I_c$ in $P_v$ have interference
at least $5$.
This implies that neither $M$, nor $C$, nor
any point in the inhibitor of $P_v$ can be
incident to an outgoing edge of $P_v$:
such an edge would increase the interference at
$M$ or at $I_c$.
In particular, note that the
distance between the inhibitors in
two distinct vertex gadgets is at least
$\sqrt{2}/2 - O(\eps) > 1/2 +O(\eps)$,
the distance between $M$
and its corresponding inhibitor; see the
dotted line in Fig~\ref{fig:gadget_example}.

Thus, all outgoing edges for $P_v$ must originate in
a satellite station.
If there were a satellite station in $P_v$ where
both points are incident to outgoing edges,
the interference at $M$ would increase.
Furthermore, if there were a satellite station in $P_v$ with
an outgoing edge that does not go the partner station,
this would increase the interference at the main point
of the partner vertex gadget, or at the inhibitor center $I_v$ of $P_v$.
\qed{}
\end{proof}

\begin{proof}[of Proposition~\ref{prop:descendants}]
Since $p_l$ and $p_r$ each have a
path to $p$, the  interval $[p_l, p_r]$ is
covered by edges that begin in proper descendants of $p$.
Thus, if $[p_l, p_r]$ contains a point $z$ that is not a descendant
of $p$, then $z$ would be covered by an edge $p_1 p_2$ with
$p_1$ a proper descendant of $p$.
Thus, $p_1 p_2$ would be a cross edge with larger hop-distance to the
root, despite the choice of $pq$.
\qed{}
\end{proof}

\begin{proof}[of Proposition~\ref{clm:validass}]
We must show that all points in $G_P(\widehat{N})$ can reach the root.
At most two edges change: $pq$ and (potentially)
$z' z$.
First, consider the change of $pq$ to $pz$.
This affects only the descendants of $p$.
Since $z$ is not a descendant of $p$,
the path from $z$ to the root does not use the edge $pq$, and hence
all descendants of $p$ can still reach the root.
Second, consider the change of $z' z$ to an edge from $z'$ to
$p_l$ or $p_r$. Both $p_l$ and $p_r$ have $z$ as ancestor (since we
introduced the edge $pz$), so all descendants of $z'$
can still reach the root.
\qed{}
\end{proof}

\begin{proof}[of Proposition~\ref{prop:lowerb}]
The proof is by induction on $i$. For $P_0$ and $P_1$, the claim
is clear.

Now consider a valid assignment $N$ for $P_i$ with sink $r$. Let $Q$ and
$R$ be the two $P_{i-1}$ subsets of $P_i$, and suppose
without loss of generality that $r \in R$.
Let $E$ be the edges that cross from $Q$ to $R$.
Fix a point $p \in Q$, and let $q$ be the last vertex on the
path from $p$ to $r$ that lies in $Q$. We replace every
edge $ab \in E$ with $a \neq q$
by the edge $aq$. By the definition of $P_i$, this does not increase the interference.
We thus obtain a valid assignment
$N' : Q \rightarrow Q$ with sink $q$ such that
$I(N) \geq I(N') + 1$, since the ball for the edge
between $q$ and $R$ covers all of $Q$. By induction,
we have $I(N') \geq i-1$, so $I(N) \geq i$, as claimed.
\qed{}
\end{proof}

\section{Nearest Neighbor Algorithm and Lower Bound}\label{sec:nna}

First, we prove  that we can always obtain
interference $O(\log n)$, a fact used in Section~\ref{sec:qalgo}.
This is achieved by the \emph{Nearest-Neighbor-Algorithm}
(NNA)~\cite{RickenbachWaZo09,Korman12}.
It works as follows.

At each step, we maintain a partition
$\mathcal{S} = \{S_1, S_2, \ldots, S_k\}$ of $P$, such that
the convex hulls of the $S_i$ are disjoint. Each set $S_i$
has a designated sink $r_i \in S_i$ and an assignment
$N: S_i \rightarrow S_i$ such that the graph
$G_{S_i}(N_i)$ is acyclic and has $r_i$ as the only sink.
Initially, $\mathcal{S}$ consists of $n$ singletons, one
for each point in $P$. Each point in $P$ is the sink of its
set, and the assignments are trivial.

Now we describe how to go from a partition
$\mathcal{S} = \{S_1, \dots, S_k\}$ to a new partition $\mathcal{S}'$.
For each sink $r_i \in S_i$, we define the successor $Q(r_i)$ as the closest
point to $r_i$ in $P \setminus S_i$. We will ensure that this closest
point is unique in every round after the first.
In the first round, we break
ties arbitrarily
Consider the directed graph $R$ that has vertex set $P$ and
contains all edges from the component graphs $G_{S_i}(N_i)$ together
with edges $r_i Q(r_i)$, for $i = 1, \dots, k$. Let $S_1', S_2', \dots, S_{k'}'$ be the
components of $R$. Each such component $S_j'$ contains exactly one cycle,
and each such cycle contains exactly two sinks $r_a$ and $r_{a+1}$.
Pick $r_j' \in \{r_a, r_{a+1}\}$ such that the distances between
$r_j'$ and the closest points in the neighboring components
$S_{j-1}'$ and $S_{j+1}'$ are distinct (if they exist). At least
one of $r_a$ and $r_{a+1}$ has this property, because $r_a$ and
$r_{a+1}$ are distinct. Suppose that $r_j' = r_a$ (the other case
is analogous). We make $r_a$ the new sink of $S_j'$, and we
let $N_j'$ be the union of $r_{a+1} Q(r_{a+1})$ and
the assignments $N_i$ for all
components $S_i \subseteq S_j$. Clearly, $N_j'$ is a valid assignment
for $S_j'$. We set $\mathcal{S}' = \{S_1', \dots, S_{k'}'\}$.
This process continues until a single component remains.

\begin{obs}\label{obs:nna_upper}
The nearest neighbor algorithm ensures interference at most
$\lceil \log n \rceil + 2$.
\end{obs}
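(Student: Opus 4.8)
The plan is to fix an arbitrary point $x \in P$ and bound the number of transmission ranges of the final assignment that cover $x$. Every non-root point $p$ receives its outgoing edge $p \to N(p)$ at exactly one round, namely the round in which $p$ is a sink but not the surviving sink, so that its successor edge $r_i Q(r_i)$ is retained; call this the round at which $p$ \emph{fires}, and recall that then $N(p) = Q(p)$ is the closest point to $p$ outside $p$'s current component $S_p$. Hence the closed ball $B(p, \|p - N(p)\|)$ contains no point of $P \setminus S_p$ other than $N(p)$ itself, so every transmission range covering $x$ falls into one of two classes: \textbf{(interior)} $x \in S_p$ with $\|p-x\| \le \|p - N(p)\|$, or \textbf{(boundary)} $x = N(p)$. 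I would bound the two classes separately.

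The first ingredient is that the algorithm terminates within $\lceil \log n \rceil$ rounds. As the algorithm description already establishes, contracting each component to its sink turns the successor edges into a graph in which every weak component contains exactly one cycle, and each such cycle contains exactly two sinks. Thus every new component is the union of at least two old ones, the number of components at least halves each round, and there are at most $\lceil \log n \rceil$ rounds.

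For the interior class I would argue that at most one firing per round covers $x$ this way: if $x \in S_p$ then $S_p$ is precisely the component containing $x$ at that round, and since each component has a unique sink, $p$ must be that sink. Summing over the at most $\lceil \log n \rceil$ rounds bounds the interior contribution by $\lceil \log n \rceil$.

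The crux — and the step I expect to be the main obstacle — is showing the boundary class contributes at most $2$, i.e.\ at most one firing with successor $x$ from each side. Here I would exploit an \emph{immediate-merge} phenomenon: if $p$, say to the left of $x$, fires with $N(p) = x$ at some round, then the edge $p \to x$ links $p$'s component to $x$'s component, so the two are weakly connected and are merged in the very next round. Because components keep disjoint convex hulls and only grow, from then on $x$'s component contains every point of $P$ lying between $p$ and $x$, and in particular the point immediately to the left of $x$; no later sink to the left of $x$ can then have $x$ as its nearest outside point. Hence there is at most one left-firing into $x$ over the whole run, and symmetrically at most one from the right, so the boundary class contributes at most $2$. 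Adding the two classes gives interference at most $\lceil \log n \rceil + 2$. The only genuinely delicate point is verifying this one-firing-per-side claim, which rests on combining the disjoint-convex-hull invariant with the fact that a retained successor edge forces a merge in the following round.
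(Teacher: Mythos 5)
Your decomposition is appealing, and much of it is right: the bound on the number of rounds, the interior-class bound of one range per round, and the immediate-merge argument showing that at most one sink per side can ever fire with $N(p)=x$ are all correct. That merge mechanism is in fact the same one the paper uses, phrased there per round rather than per class: a point that receives an incoming edge while on the boundary of its component lies in the interior of all future components, so the ``$+2$'' round happens at most once. The genuine gap is not the one-firing-per-side claim you flagged as delicate, but the dichotomy that feeds into it. Your assertion that $B(p,\|p-N(p)\|)$ contains no point of $P\setminus S_p$ other than $N(p)$ is false in the first round, where the algorithm explicitly breaks nearest-neighbor ties arbitrarily: if $p$ has two nearest neighbors $x$ and $y$ and chooses $N(p)=y$, the retained range still covers $x$, and this range lies in neither of your classes ($x\notin S_p=\{p\}$ and $x\neq N(p)$). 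Ties are not a degenerate side case in one dimension --- for equally spaced points every interior point has two nearest neighbors --- and the paper budgets them explicitly; its round-one allowance of $2$, ``the increase by $2$ can happen if there is a point with two nearest neighbors,'' exists precisely for this.

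Nor can you absorb these tie ranges by redefining the boundary class as ``ranges centered outside $S_p$ that cover $x$,'' because then the one-per-side claim becomes false. Concretely, let $z$ be the immediate right neighbor of $x$ with $\|z-x\|=\|z-w\|$ for some $w$ to the right of $z$, and let the tie be broken towards $w$. Then $z$'s round-one range covers $x$, but $z$'s component merges rightward, not with $x$'s; so $x$ can remain the rightmost point of its component, and in a later round the sink of the component containing $z$ and $w$ can legitimately fire into $x$ with successor exactly $x$. That is two ranges from the right covering $x$, on top of the interior ranges and a possible tie range from the left. In such configurations the total count can reach (number of rounds) $+\,3$, and the stated bound survives only because keeping the relevant components separate long enough forces the number of rounds strictly below $\lceil\log n\rceil$ --- an interaction your argument does not establish. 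To repair the proof you must handle first-round ties explicitly, as the paper's per-round accounting does, or else prove this additional trade-off between tie ranges, later incoming firings, and the number of rounds.
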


\begin{proof}
Since each component in $\mathcal{S}$ is combined with at least one
other component of $\mathcal{S}$, we have $k' \leq \lfloor k/2 \rfloor$,
so there are at most $\lceil \log n \rceil$ rounds.

Now fix a point $p \in P$. We claim that in the interference
of $p$ increases by at most $1$ in each round, except for
possibly two rounds in which the interference increases by $2$.
Indeed, in the first round, the interference increases by at most $2$,
since each point connects to its nearest neighbor (the increase by $2$ can
happen if there is a point with two nearest neighbors).
In the following rounds, if $p$ lies in the interior of a connected component $S_i$,
its interference increases by at most $1$ (through the edge from
$r_i$ to $Q(r_i)$). If $p$ lies on the boundary of
$S_i$, its interference may increase by $2$ (through the edge between
$r_i$ and $Q(r_i)$ and the edge that connects a neighboring component
to $p$). In this case, however, $p$ does not appear on the boundary of
any future components, so the increase by $2$ can happen at most once.
\qed{}
\end{proof}

Next, we show that interference
$\Omega(\log n)$ is sometimes necessary.
We make use of the points sets $P_i$ constructed in Section~\ref{sec:further}.

\begin{corol}
For every $n$, there exists a point set $Q_n$ with $n$ points such that
every valid assignment for $N$ has interference $\lfloor \log n \rfloor$.
\end{corol}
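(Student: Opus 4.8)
The plan is to reuse the hard point sets $P_i$ from Section~\ref{sec:further} as the core of $Q_n$ and to pad them with harmless extra points. Given $n$, set $i = \lfloor \log n \rfloor$, so that $2^i \le n < 2^{i+1}$ and in particular $|P_i| = 2^i \le n$. I would build $Q_n$ by taking one copy of $P_i$, where we may assume $P_i \subseteq [0,\ell_i]$ with $\ell_i = (3^i-1)/2$ its diameter, and then adding the remaining $n - 2^i$ points at the coordinates $2\ell_i+1, 2\ell_i+2, \ldots, 2\ell_i + (n-2^i)$ (for $n = 2^i$ this padding is empty and $Q_n = P_i$). Writing $A = P_i$ for the left block and $B$ for the added points, this yields $|Q_n| = n$, and the crucial feature is that the gap separating $A$ from $B$ is $(2\ell_i+1) - \ell_i = \ell_i + 1 > \ell_i = \operatorname{diam}(A)$.

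The core of the argument is a contraction step in the spirit of the proof of Proposition~\ref{prop:lowerb}. Because the gap exceeds $\operatorname{diam}(A)$, any edge of a valid assignment that joins a point $a \in A$ to a point of $B$ has length greater than $\ell_i$, so its transmission range covers all of $A$. I would exploit this as follows. Let $N$ be an arbitrary valid assignment for $Q_n$ with sink $r$; fix any $p \in A$ and let $q$ be the last point on the directed path from $p$ to $r$ that still lies in $A$. Define an assignment $N'$ on $A$ by making $q$ the sink and redirecting every point of $A$ whose $N$-parent lies in $B$ to $q$, while leaving all edges internal to $A$ untouched. Every directed path of $N$ that leaves $A$ now terminates at $q$, so $N'$ is acyclic with the single sink $q$ reachable from every point of $A$; that is, $N'$ is a valid assignment for $A$.

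It remains to check that the contraction does not raise the interference. For a fixed point $x \in A$, every range of $B_A(N')$ covering $x$ is contained in the range of $B_{Q_n}(N)$ with the same centre: a redirected edge $aq$ is shorter than the original crossing edge $ab$, and unchanged edges are identical, while the range formerly centred at $q$ (if any) is simply deleted. Hence the number of ranges of $N'$ covering $x$ is at most the number of ranges of $N$ with centre in $A$ covering $x$, giving $I(N') \le I(N)$. Proposition~\ref{prop:lowerb} then yields $I(N') \ge i$, and therefore $I(N) \ge i = \lfloor \log n\rfloor$ for every valid $N$, as claimed. I would close by remarking that, together with Observation~\ref{obs:nna_upper}, this shows the $\Theta(\log n)$ bound to be tight.

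I expect the only delicate point to be the contraction step, namely verifying that redirecting all $A$-to-$B$ edges to the single point $q$ preserves both validity (acyclicity and reachability of $q$ from all of $A$) and the interference bound; everything else (the choice of $i$, the point count, and the gap computation) is routine. The separation condition $\text{gap} > \operatorname{diam}(A)$ is exactly what forces the redirected ranges to shrink rather than grow, so I would state it explicitly before carrying out the contraction.
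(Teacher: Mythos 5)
Your proposal is correct and takes essentially the same approach as the paper: pad $P_{\lfloor \log n \rfloor}$ with the remaining $n - 2^{\lfloor \log n \rfloor}$ points placed far enough away, then invoke Proposition~\ref{prop:lowerb}. The only difference is that you make explicit what the paper compresses into ``sufficiently far away''---namely the gap condition $\mathrm{gap} > \mathrm{diam}(P_i)$ and the contraction argument (redirecting crossing edges to the last point $q$ in the copy of $P_i$) showing the padding cannot lower the interference.
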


\begin{proof}
Take the point set $P_{\lfloor \log n \rfloor}$ from Section~\ref{sec:further} and add
$n - 2^{\lfloor \log n \rfloor}$ points sufficiently far away.
The bound on the interference follows from Proposition~\ref{prop:lowerb}.
\qed{}
\end{proof}

\end{document}